\renewcommand{\le}{\leqslant}
\renewcommand{\ge}{\geqslant}
\newcommand{\cmp}{\overline}
\newcommand{\ol}{\overline}
\newcommand{\eps}{\varepsilon}
\newcommand{\emp}{\emptyset}
\newcommand{\Sig}{\Sigma}
\newcommand{\sig}{\sigma}
\newcommand{\noin}{\noindent}
\newcommand{\defeq}{\stackrel{\rm def}{=}}
\newcommand{\bi}{\begin{itemize}}
\newcommand{\ei}{\end{itemize}}
\newcommand{\be}{\begin{enumerate}}
\newcommand{\ee}{\end{enumerate}}
\newcommand{\bd}{\begin{description}}
\newcommand{\ed}{\end{description}}
\newcommand{\bq}{\begin{quote}}
\newcommand{\eq}{\end{quote}}
\newcommand{\txt}[1]{\mbox{ #1 }}
\newcommand{\cA}{{\mathcal A}}
\newcommand{\cB}{{\mathcal B}}
\newcommand{\cC}{{\mathcal C}}
\newcommand{\cI}{{\mathcal I}}
\newcommand{\cT}{{\mathcal T}}
\newcommand{\raL}{{\hspace{.1cm}{\sim_L} \hspace{.1cm}}}
\newcommand{\lraL}{{\hspace{.1cm}{\approx_L} \hspace{.1cm}}}
\newcommand{\tr}{{transformation}}
\newcommand{\timg}{\mbox{img}}
\newcommand{\tdom}{\mbox{dom}}
\title{Syntactic Complexity of Star-Free Languages\thanks{This work was supported by the Natural Sciences and Engineering Research Council of Canada under grant No.~OGP0000871.
}
}
\author{Janusz~Brzozowski and Baiyu Li
 }
\authorrunning{Brzozowski, Li}   
\institute{David R. Cheriton School of Computer Science\\ University of Waterloo, 
Waterloo, ON, Canada N2L 3G1\\
\{{\tt \{brzozo,b5li\}@uwaterloo.ca}\}
}
\begin{document}

\maketitle
\today
\begin{abstract}
The syntactic complexity of a regular language is the cardinality of its syntactic semigroup.
The syntactic complexity of a subclass of regular languages is the maximal syntactic complexity of languages in that subclass, taken as a function of the state complexity of these languages.
We study the syntactic complexity of  star-free regular languages, that is, languages that can be constructed from finite languages using union, complement and concatenation. 
We find tight upper bounds on the syntactic complexity of languages accepted by monotonic and partially monotonic automata.
We introduce ``nearly monotonic'' automata, which accept star-free languages, and find a tight upper bound on the syntactic complexity of languages accepted by such automata.
We conjecture that this bound is also an upper bound on the syntactic complexity of star-free languages.
\bigskip

\noin
{\bf Keywords:}
finite automaton, monotonic, nearly monotonic, partially monotonic, star-free language, syntactic complexity, syntactic semigroup
\end{abstract}

\section{Introduction}

The class of regular languages is the smallest class containing the finite languages and closed under union, concatenation and star. Since regular languages are also closed under complementation, one can redefine them as the smallest class containing finite languages and closed under boolean operations, concatenation and star. 
\emph{Star-free} languages constitute the smallest class containing the finite languages and closed under boolean operations and concatenation. 
In 1965, Sch\"utzenberger proved~\cite{Sch65} that a language is star-free if and only if its syntactic monoid is \emph{group-free}, that is, has only trivial subgroups. An equivalent condition is that the minimal deterministic automaton of a star-free language is \emph{permutation-free}, that is, has only trivial permutations. Another point of view is that these automata are \emph{counter-free}, since they cannot count modulo any integer greater than 1. They can, however, \emph{count to a threshold}, that is $0,1, \ldots, (n-1$ \emph{or more}).
Such automata are called \emph{aperiodic,} and this is the term  we use. Star-free languages were  studied in detail in 1971 by McNaughton and Papert~\cite{McNP71}. 

The \emph{state complexity of a regular language} is the  number of states in the minimal deterministic finite automaton (DFA) recognizing that language. 
The \emph{state complexity of an operation} on languages in a subclass of regular languages is the maximal  state complexity of the language resulting from the operation taken as a function of the state complexities of the arguments in that subclass.
State complexity of regular operations has been studied quite extensively; for a survey of this topic and a list of references see~\cite{Yu01}. 

An equivalent notion to state complexity is that of \emph{quotient complexity}~\cite{Brz10}, which is the number of left quotients of the language; we prefer to use the term quotient complexity.

The notion of quotient complexity can be derived from the Nerode equivalence~\cite{Ner58}, 
while the Myhill equivalence~\cite{Myh57} leads to the syntactic semigroup of a language and to its \emph{syntactic complexity}, which is the cardinality of the syntactic semigroup.
It was pointed out in~\cite{BrYe10} that syntactic complexity can be very different for regular languages with the same quotient complexity.

In contrast to state complexity, syntactic complexity has not received much attention. In 1970 Maslov~\cite{Mas70} dealt with generators of the semigroup of all transformations of a finite set in the setting of finite automata. In 2003--2004, Holzer and K\"onig~\cite{HoKo04}, and independently, Krawetz, Lawrence and Shallit~\cite{KLS03} studied the syntactic complexity of automata with unary and binary alphabets. 
In 2010 Brzozowski and Ye~\cite{BrYe10} examined ideal and closed languages, and in 2011 Brzozowski, Li and Ye~\cite{BLY11} studied  prefix-, suffix-, and bifix-free languages and their complements.

Here we deal with star-free languages.
It has been shown in 2011 by Brzozowski and Liu~\cite{BrLiu11} that boolean operations, concatenation, star, and reversal in the class of star-free  languages meet all the quotient complexity bounds of regular languages, with a few exceptions. Also, Kutrib, Holzer, and 
Meckel~\cite{HKM11} proved in 2011 that in most cases exactly the same tight state complexity bounds  are reached by operations on aperiodic nondeterministic finite automata (NFA's) as on general NFA's.
In sharp contrast to this, we show that the syntactic complexity of star-free languages is much smaller than that of regular languages, which is $n^n$ for languages with quotient complexity $n$.
We derive tight upper bounds for three subclasses of star-free languages, the monotonic, partially monotonic, and nearly monotonic languages.
We conjecture that the bound for star-free languages is the same as  for 
nearly monotonic languages.

The remainder of the paper is structured as follows. Our terminology and some basic facts are stated in Section~\ref{sec:trans}. Aperiodic transformations are examined in Section~\ref{sec:aper}. In Section~\ref{sec:mono}, we study monotonic,  partially monotonic, and nearly monotonic automata. In Section~\ref{sec:sc3} we derive a tight upper bound on the syntactic complexity of star-free languages having quotient complexity $3$. Section~\ref{sec:con} concludes the paper.

% \section{Transformations}\label{sec:trans}
\section{Preliminaries}\label{sec:trans}

We assume that the reader is familiar with basic theory of formal languages as described in~\cite{RozSal97}, for example. Let $\Sig$ be a non-empty finite alphabet and $\Sig^*$, the free monoid generated by $\Sig$. A \emph{word} is any element of $\Sig^*$, and the empty word is $\eps$. The length of a word $w\in \Sig^*$ is $|w|$. 

A \emph{language} over $\Sig$ is any subset of $\Sig^*$. 
For any languages $K$ and $L$ over $\Sig$, we consider the following {\em boolean operations}: complement ($\cmp{L} = \Sig^* \setminus L$), union ($K \cup L$), intersection ($K \cap L$), difference ($K \setminus L$), and symmetric difference ($K \oplus L$). The {\em product}, or \emph{(con)catenation}, of $K$ and $L$ is $KL = \{w \in \Sig^* \mid w = uv, u \in K, v \in L\}$; and the {\em star} of $L$ is $L^* = \bigcup_{i \ge 0}L^i$. 

Regular languages are defined by induction: For the basis, the languages  $\emptyset$, $\{\eps\}$, and $\{a\}$ for any $a \in \Sig$ are regular. Then if $K$ and $L$ are regular,  
so are $K\cup L$, $KL$ and $L^*$. Finally, no language is regular unless it is constructed from the basic languages using  a finite number of the operations of union, product and star. Since regular languages are also closed under complements, we can redefine them as the smallest class containing the basic languages and closed under boolean operations, product, and star.
{\em Star-free} languages are the smallest class of languages constructed from the basic languages using only boolean operations and product.

A~\emph{deterministic finite automaton} (DFA) is a quintuple $\cA=(Q, \Sig, \delta, q_1,F)$, where 
$Q$ is a finite, non-empty set of \emph{states}, $\Sig$ is a finite non-empty \emph{alphabet}, $\delta:Q\times \Sig\to Q$ is the \emph{transition function}, $q_1\in Q$ is the \emph{initial state}, and $F\subseteq Q$ is the set of \emph{final states}. We extend $\delta$ to $Q \times \Sig^*$ in the usual way.
The DFA $\cA$ accepts a word $w \in \Sigma^*$ if ${\delta}(q_1,w)\in F$. 
The set of all words {\it accepted} by $\cA$ is $L(\cA)$. 
Regular languages are exactly the languages accepted by DFA's. 
By the \emph{language of a state} $q$ of $\cA$ 
we mean the language $L_q$ accepted 
by the DFA $(Q,\Sigma,\delta,q,F)$. 
A state is \emph{empty} if its language is empty.

An \emph{incomplete deterministic finite automaton (IDFA)} is a quintuple 
$\cI=(Q, \Sig, \delta, q_1,F)$, where $Q$, $\Sig$, $q_1$ and $F$ are as in a DFA, and $\delta$ is a partial function such that either 
$\delta(q,a)=p$ for some $p\in Q$ or $\delta(q,a)$ is undefined for any $p,q\in Q$, $a\in \Sig$.
Every DFA is also an IDFA.
%An IDFA is \emph{minimal} if no two of  its states are equivalent.

The \emph{left quotient}, or simply \emph{quotient,} of a language $L$ by a word $w$ is the language $L_w=\{x \in \Sig^*\mid wx\in L \}$. 
For any language $L$ over $\Sig$, the \emph{Nerode equivalence} $\raL$ of $L$ is defined as follows~\cite{Ner58}: 
\begin{equation*}
x \raL y \mbox{ if and only if } xv\in L  \Leftrightarrow yv\in L, \mbox { for all } v\in\Sig^*.
\end{equation*}
Clearly, $L_x=L_y$ if and only if $x\raL y$.
Thus each equivalence class of the Nerode equivalence corresponds to a distinct quotient of $L$.

Let $L$ be a regular language. 
The \emph{quotient DFA} of $L$ is 
$\cA=(Q, \Sig, \delta, q_1,F)$, where $Q=\{L_w\mid w\in\Sig^*\}$, $\delta(L_w,a)=L_{wa}$, 
$q_1=L_\eps=L$,  and $F=\{L_w \mid \eps \in L_w\}$.
Note that every state $L_w$ of a quotient DFA is reachable from the initial state $L$ by the word $w$. Also, the language of every state is distinct, since only distinct quotients are used as states. Thus every quotient DFA is minimal.
The \emph{quotient IDFA} of $L$ is the quotient DFA of $L$ after the empty state (if present) and all transitions incident to it are removed. 
The quotient IDFA is also minimal. 

If a regular language $L$ has quotient IDFA $\cI$, then the DFA $\cA$ obtained by adding an empty quotient to $\cI$, if necessary, is the quotient DFA of $L$.
Conversely, if $L$ has quotient DFA $\cA$, then the IDFA $\cI$ obtained from $\cA$ by removing the empty quotient, if present, is the quotient IDFA of $L$.
The two automata $\cA$ and $\cI$ are equivalent, in the sense that they accept the same language.

The number  $\kappa(L)$ of distinct quotients of $L$ is the \emph{quotient complexity} 
of~$L$. 
Since the quotient DFA of $L$ is minimal, quotient complexity is the same as  state complexity, but there are advantages to using quotients~\cite{Brz10}.

The \emph{Myhill equivalence} $\lraL$ of $L$ is defined as follows~\cite{Myh57}:
\begin{equation*}
x \lraL y \mbox{ if and only if } uxv\in L  \Leftrightarrow uyv\in L\mbox { for all } u,v\in\Sig^*.
\end{equation*}
This equivalence is also known as the \emph{syntactic congruence} of $L$.
The quotient set $\Sig^+/ \lraL$ of equivalence classes of the relation $\lraL$,  is a semigroup called the \emph{syntactic semigroup} of $L$ (which we denote by $T_L$), and 
$\Sig^*/ \lraL$ is the \emph{syntactic monoid} of~$L$. 
The \emph{syntactic complexity} $\sig(L)$ of $L$ is the cardinality of its syntactic semigroup.
The \emph{monoid complexity} $\mu(L)$ of $L$ is the cardinality of its syntactic monoid.
If the  equivalence class containing $\eps$ is a singleton in the syntactic monoid, then $\sig(L)=\mu(L)-1$; otherwise, $\sig(L)=\mu(L)$.

\smallskip

A {\em partial transformation} of a set $Q$ is a partial mapping of $Q$ into itself; we consider partial transformations of finite sets only, and we assume without loss of generality  that $Q = \{1,2,\ldots, n\}$. Let $t$ be a partial transformation of $Q$. If $t$ is defined for $i \in Q$, then $it$ is the image of $i$ under $t$; otherwise it is undefined and we write $it = \Box$. For convenience, we let $\Box t = \Box$. If $X$ is a subset of $Q$, then $Xt = \{it \mid i \in X\}$. The {\em composition} of two partial transformations $t_1$ and $t_2$ of $Q$ is a partial transformation $t_1 \circ t_2$ such that $i (t_1 \circ t_2) = (i t_1) t_2$ for all $i \in Q$. We usually drop the composition operator ``$\circ$'' and write $t_1t_2$ for short. 

An arbitrary partial transformation can be written in the form
\begin{equation*}\label{eq:transmatrix}
t=\left( \begin{array}{ccccc}
1 & 2 &   \cdots &  n-1 & n \\
i_1 & i_2 &   \cdots &  i_{n-1} & i_n
\end{array} \right ),
\end{equation*}
where $i_k = kt$ and $i_k\in Q \cup \{\Box\}$, for $1\le k\le n$. The {\em domain} of $t$ is the set 
$ \tdom~t = \{k \in Q \mid kt \neq \Box\}.$ 
The {\em image} of $t$ is the set
$\timg~t = (\tdom~t) t = \{kt \mid k \in Q \txt{and} kt \neq \Box\}.$ We also use the notation $t = [i_1,\ldots,i_n]$ for the partial transformation $t$ above. 

A  \emph{(full) transformation} of a set $Q$ is a total mapping of $Q$ \emph{into} itself, whereas a \emph{permutation} of $Q$ is a mapping of $Q$ \emph{onto} itself. In other words, a transformation $t$ of $Q$ is a partial transformation where $\tdom~t = Q$, and a permutation of $Q$ is a transformation $t$ where $\timg~t = Q$. 

The \emph{identity} transformation maps each element to itself, that is, $it=i$ for $i=1,\ldots,n$.
A transformation $t$ contains a \emph{cycle} of length $k$ if there exist pairwise distinct elements $i_1,\ldots,i_k$ such that
$i_1t=i_2, i_2t=i_3,\ldots, i_{k-1}t=i_k$, and $i_kt=i_1$.
A cycle is denoted by $(i_1,i_2,\ldots,i_k)$.
For $i<j$, a \emph{transposition} is the cycle $(i,j)$, and $(i,i)$ is the identity.
A \emph{singular} transformation, denoted by $i\choose j$, has $it=j$ and $ht=h$ for all $h\neq i$, and $i \choose i$ is the identity.
A~\emph{constant} transformation,  denoted by $Q \choose j$, has $it=j$ for all $i$.
Let $\cT_Q$ be the set of all full transformations of $Q$.

In this paper we study full transformations performed by DFA's and partial transformations performed by IDFA's. 

Let $\cA = (Q, \Sig, \delta, q_1, F)$ be a DFA. For each word $w \in \Sig^*$, the transition function defines a full transformation $t_w$ of $Q$: for all $i \in Q$, 
$it_w \defeq \delta(i, w).$ 
The set $T_{\cA}$ of all such transformations by non-empty words forms a subsemigroup of $\cT_Q$, called the \emph{transition semigroup} of $\cA$~\cite{RozSal97}. 
Conversely, we can use a set  $\{t_a \mid a \in \Sig\}$ of transformations to define $\delta$, and so the DFA $\cA$. When the context is clear we simply write $a = t$, where $t$ is a transformation of $Q$, to mean that the transformation performed by $a \in \Sig$ is~$t$. If $\cA$ is the quotient DFA of $L$, then $T_{\cA}$ is isomorphic to the syntactic semigroup $T_L$ of $L$ \cite{McNP71}, and we represent elements of $T_L$ by transformations in $T_{\cA}$. 

If $\cI$ is an IDFA, then each word $w\in\Sig^*$ performs a partial transformation of~$Q$.
The set of all these partial transformations is the \emph{transition semigroup}  of~$\cI$.
If $\cI$ is the quotient IDFA of a language $L$, this semigroup is isomorphic to the transition semigroup of the quotient DFA of $L$, and hence also to the syntactic semigroup of $L$.

\section{Aperiodic Transformations}
\label{sec:aper}

A transformation is \emph{aperiodic} if it contains no cycles of length greater than 1.
A semigroup $T$ of transformations is \emph{aperiodic} if and only if it contains only aperiodic transformations. Thus a language $L$ with quotient DFA $\cA$ is star-free if and only if every transformation in $T_\cA$ is aperiodic.

Each aperiodic transformation can be characterized by a forest of labeled rooted trees as follows.
Consider, for example, the forest of Fig.~\ref{fig:forest}~(a), where the roots are at the bottom. 
\begin{figure}[b]
\begin{center}
\setlength{\unitlength}{0.00050000in}
\begingroup\makeatletter\ifx\SetFigFont\undefined%
\gdef\SetFigFont#1#2#3#4#5{%
  \reset@font\fontsize{#1}{#2pt}%
  \fontfamily{#3}\fontseries{#4}\fontshape{#5}%
  \selectfont}%
\fi\endgroup%
{\renewcommand{\dashlinestretch}{30}
\begin{picture}(5670,1957)(0,-10)
\put(818,97){\makebox(0,0)[lb]{\smash{{\SetFigFont{10}{12.0}{\familydefault}{\mddefault}{\updefault}(a)}}}}
\put(956,750){\blacken\ellipse{96}{96}}
\put(956,750){\ellipse{96}{96}}
\put(656,1800){\blacken\ellipse{96}{96}}
\put(656,1800){\ellipse{96}{96}}
\put(56,750){\blacken\ellipse{96}{96}}
\put(56,750){\ellipse{96}{96}}
\put(1856,750){\blacken\ellipse{96}{96}}
\put(1856,750){\ellipse{96}{96}}
\put(1856,1350){\blacken\ellipse{96}{96}}
\put(1856,1350){\ellipse{96}{96}}
\put(1260,1800){\blacken\ellipse{96}{96}}
\put(1260,1800){\ellipse{96}{96}}
\path(956,1298)(956,810)(956,765)(971,772)
\path(1856,1290)(1856,802)(1856,757)(1871,764)
\path(1247,1753)(993,1385)
\path(689,1750)(937,1367)
\put(1069,652){\makebox(0,0)[lb]{\smash{{\SetFigFont{10}{12.0}{\familydefault}{\mddefault}{\updefault}5}}}}
\put(1060,1252){\makebox(0,0)[lb]{\smash{{\SetFigFont{10}{12.0}{\familydefault}{\mddefault}{\updefault}4}}}}
\put(1991,1260){\makebox(0,0)[lb]{\smash{{\SetFigFont{10}{12.0}{\familydefault}{\mddefault}{\updefault}6}}}}
\put(154,660){\makebox(0,0)[lb]{\smash{{\SetFigFont{10}{12.0}{\familydefault}{\mddefault}{\updefault}1}}}}
\put(1976,645){\makebox(0,0)[lb]{\smash{{\SetFigFont{10}{12.0}{\familydefault}{\mddefault}{\updefault}7}}}}
\put(386,1718){\makebox(0,0)[lb]{\smash{{\SetFigFont{10}{12.0}{\familydefault}{\mddefault}{\updefault}2}}}}
\put(1385,1732){\makebox(0,0)[lb]{\smash{{\SetFigFont{10}{12.0}{\familydefault}{\mddefault}{\updefault}3}}}}
\put(4426.619,602.509){\arc{263.289}{5.1804}{10.2363}}
\blacken\path(4580.229,639.870)(4486.000,720.000)(4531.435,604.954)(4580.229,639.870)
\put(5333.619,602.509){\arc{263.289}{5.1804}{10.2363}}
\blacken\path(5487.229,639.870)(5393.000,720.000)(5438.435,604.954)(5487.229,639.870)
\put(3534.619,587.509){\arc{263.289}{5.1804}{10.2363}}
\blacken\path(3688.229,624.870)(3594.000,705.000)(3639.435,589.954)(3688.229,624.870)
\put(4406,1358){\blacken\ellipse{96}{96}}
\put(4406,1358){\ellipse{96}{96}}
\put(4406,758){\blacken\ellipse{96}{96}}
\put(4406,758){\ellipse{96}{96}}
\put(4106,1808){\blacken\ellipse{96}{96}}
\put(4106,1808){\ellipse{96}{96}}
\put(3506,758){\blacken\ellipse{96}{96}}
\put(3506,758){\ellipse{96}{96}}
\put(5306,758){\blacken\ellipse{96}{96}}
\put(5306,758){\ellipse{96}{96}}
\put(5306,1358){\blacken\ellipse{96}{96}}
\put(5306,1358){\ellipse{96}{96}}
\put(4710,1808){\blacken\ellipse{96}{96}}
\put(4710,1808){\ellipse{96}{96}}
\path(4697,1761)(4443,1393)
\blacken\path(4486.476,1508.801)(4443.000,1393.000)(4535.855,1474.718)(4486.476,1508.801)
\path(4139,1758)(4387,1375)
\blacken\path(4296.595,1459.422)(4387.000,1375.000)(4346.959,1492.033)(4296.595,1459.422)
\path(4403,1298)(4403,803)
\blacken\path(4373.000,923.000)(4403.000,803.000)(4433.000,923.000)(4373.000,923.000)
\path(5310,1298)(5310,803)
\blacken\path(5280.000,923.000)(5310.000,803.000)(5340.000,923.000)(5280.000,923.000)
\put(3672,698){\makebox(0,0)[lb]{\smash{{\SetFigFont{10}{12.0}{\familydefault}{\mddefault}{\updefault}1}}}}
\put(4586,705){\makebox(0,0)[lb]{\smash{{\SetFigFont{10}{12.0}{\familydefault}{\mddefault}{\updefault}5}}}}
\put(5517,690){\makebox(0,0)[lb]{\smash{{\SetFigFont{10}{12.0}{\familydefault}{\mddefault}{\updefault}7}}}}
\put(5493,1268){\makebox(0,0)[lb]{\smash{{\SetFigFont{10}{12.0}{\familydefault}{\mddefault}{\updefault}6}}}}
\put(4578,1253){\makebox(0,0)[lb]{\smash{{\SetFigFont{10}{12.0}{\familydefault}{\mddefault}{\updefault}4}}}}
\put(4850,1725){\makebox(0,0)[lb]{\smash{{\SetFigFont{10}{12.0}{\familydefault}{\mddefault}{\updefault}3}}}}
\put(3814,1726){\makebox(0,0)[lb]{\smash{{\SetFigFont{10}{12.0}{\familydefault}{\mddefault}{\updefault}2}}}}
\put(4283,91){\makebox(0,0)[lb]{\smash{{\SetFigFont{10}{12.0}{\familydefault}{\mddefault}{\updefault}(b)}}}}
\put(956,1350){\blacken\ellipse{96}{96}}
\put(956,1350){\ellipse{96}{96}}
\end{picture}
}
\end{center}
\caption{Forests and transformations.} 
\label{fig:forest}
\end{figure}
Convert this forest into a directed graph by adding a direction from each parent to its child and a self-loop to each root, as shown in Fig.~\ref{fig:forest}~(b). This directed graph defines the transformation
\begin{equation*}\label{eq:forest}
t=\left( \begin{array}{ccccccc}
1 & 2 &  3 &  4 & 5 & 6 & 7 \\
1 & 4 &  4 & 5  & 5 & 7 & 7
\end{array} \right )
\end{equation*}
and each such transformation is aperiodic since the directed graph has no cycles of length greater than one. 
Thus there is a one-to-one relation between aperiodic transformations of a set of $n$ elements and forests with $n$ nodes. 

\begin{proposition}
\label{prop:caley}
For $n\ge 1$, the number of aperiodic transformations of a set of $n$ elements is $(n+1)^{n-1}$.
\end{proposition}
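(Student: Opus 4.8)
The plan is to exploit the one-to-one correspondence established above between aperiodic transformations of an $n$-element set and \emph{rooted labeled forests} on $n$ nodes. The functional graph of an aperiodic transformation $t$ decomposes into trees oriented toward their roots, and the roots are precisely the fixed points of $t$, that is, the nodes carrying self-loops. Consequently, counting aperiodic transformations of $\{1,\ldots,n\}$ is the same as counting rooted labeled forests on this node set, and the proposition reduces to this purely combinatorial count.

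To count the rooted labeled forests on $n$ nodes, I would adjoin a new node, say $0$, and join it by an edge to the root of every tree in the forest, declaring $0$ to be the root of the resulting structure. This turns a rooted forest on $\{1,\ldots,n\}$ into a tree on the $n+1$ nodes $\{0,1,\ldots,n\}$. Conversely, deleting node $0$ from any tree on $\{0,1,\ldots,n\}$ and keeping its former neighbours as roots recovers a rooted forest on $\{1,\ldots,n\}$. These maps are mutually inverse, so the number of rooted labeled forests on $n$ nodes equals the number of labeled trees on $n+1$ nodes; rooting the $(n+1)$-node tree at $0$ carries no extra information, so no overcounting occurs.

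Finally, I would invoke Cayley's formula, by which the number of labeled trees on $m$ nodes is $m^{m-2}$. Taking $m=n+1$ yields $(n+1)^{(n+1)-2}=(n+1)^{n-1}$, which completes the count. As a sanity check, the values for $n=1$ and $n=2$ are $1$ and $3$, matching the three aperiodic transformations $[1,1]$, $[1,2]$, $[2,2]$ of a two-element set, the transposition $[2,1]$ being excluded.

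I expect the main obstacle to lie in making the forest-to-tree bijection watertight rather than in the arithmetic: one must check that the self-loops at the roots correspond exactly to the edges to the adjoined node $0$, and that passing between rooted forests on $n$ nodes and trees on $n+1$ nodes neither loses nor duplicates information. If a self-contained argument is preferred to citing Cayley's formula, an alternative is to count directly the rooted forests on $n$ nodes having exactly $k$ trees, which number $\binom{n-1}{k-1}n^{n-k}$, and to sum over $k$; the binomial theorem then gives $\sum_{k=1}^{n}\binom{n-1}{k-1}n^{n-k}=(n+1)^{n-1}$. The virtual-root reduction is, however, the cleaner route.
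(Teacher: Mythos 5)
Your proposal is correct and follows essentially the same route as the paper's proof: both establish the bijection between aperiodic transformations and rooted labeled forests on $n$ nodes (self-loops at the roots), pass to labeled trees on $n+1$ nodes via an adjoined virtual root (your node $0$, the paper's node $n+1$), and conclude by Cayley's formula. Your closing remark about the alternative count $\sum_{k=1}^{n}\binom{n-1}{k-1}n^{n-k}=(n+1)^{n-1}$ is a valid self-contained variant, but it is offered only as an aside and the main argument coincides with the paper's.
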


\begin{proof}
By Caley's theorem~\cite{Cay89,Sho95}, there are $(n+1)^{n-1}$ labeled unrooted trees with $n+1$ nodes. If we fix one node, say node $n+1$, in each of these trees to be the root, then we have $(n+1)^{n-1}$ labeled trees rooted at $n+1$. Let $T$ be any one of these trees, and let $v_1,\ldots,v_m$ be the parents of $n+1$ in $T$. By removing the root $n+1$ from each such rooted tree, we get a labeled forest $F$ with $n$ nodes formed by $m$ rooted trees, where $v_1,\ldots,v_m$ are the roots. The forest $F$ is unique since $T$ is a unique tree rooted at $n+1$. Then we get a unique aperiodic transformation of $\{1,\ldots,n\}$ by adding self-loops on $v_1,\ldots,v_m$. 

All labeled directed forests with $n$ nodes can be obtained uniquely from some rooted tree with $n+1$ nodes by deleting the root. Hence there are $(n+1)^{n-1}$ labeled forests with $n$ nodes, and that many aperiodic transformations of a set of $n$ elements.
\qed
\end{proof}

Since all the inputs of an aperiodic DFA must perform aperiodic transformations, we have the following result: 
\begin{corollary}
\label{cor:caley}
For $n\ge 1$, the syntactic complexity $\sig(L)$ of a star-free language $L$ with $n$ quotients satisfies $\sig(L)\le (n+1)^{n-1}$.
\end{corollary}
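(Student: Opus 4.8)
The plan is to read the statement as an immediate consequence of Proposition~\ref{prop:caley} together with the characterization of star-freeness given at the opening of Section~\ref{sec:aper}. First I would fix the quotient DFA $\cA=(Q,\Sig,\delta,q_1,F)$ of $L$ and record that, since $L$ has $n$ quotients and the quotient DFA is minimal with exactly one state per quotient, we have $|Q|=n$. Because every input letter of a DFA acts as a full transformation of $Q$, the transition semigroup $T_{\cA}$ is a subsemigroup of $\cT_Q$, the semigroup of all full transformations of an $n$-element set.

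Next I would invoke the two facts established earlier. On the one hand, the syntactic complexity is $\sig(L)=|T_L|$, and $T_{\cA}$ is isomorphic to $T_L$, so $\sig(L)=|T_{\cA}|$. On the other hand, $L$ is star-free if and only if every transformation in $T_{\cA}$ is aperiodic; hence $T_{\cA}$ is contained in the set $A_n$ of all aperiodic full transformations of $Q$. Combining these, $\sig(L)=|T_{\cA}|\le |A_n|$. By Proposition~\ref{prop:caley}, $|A_n|=(n+1)^{n-1}$, which yields $\sig(L)\le (n+1)^{n-1}$ and completes the argument.

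There is no real obstacle here: the corollary is a counting bound obtained by enclosing the transition semigroup inside the already-counted set of aperiodic transformations. The only points that require a moment's care are bookkeeping ones. I would make sure that the state count is exactly $n$ (the empty quotient, if present, is one of the $n$ states and contributes a full transformation like any other, so it does not disturb the count), and that passing from the syntactic monoid to the syntactic semigroup---that is, restricting to non-empty words---does not add anything beyond $T_{\cA}$, since the identity element plays no special role in the upper bound. With these observations in place, the chain of (in)equalities $\sig(L)=|T_{\cA}|\le (n+1)^{n-1}$ is the entire proof.
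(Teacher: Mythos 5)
Your proof is correct and follows exactly the paper's (one-sentence) justification: the transition semigroup of the quotient DFA of a star-free language consists only of aperiodic transformations, is isomorphic to the syntactic semigroup, and hence its size is bounded by $|A_n|=(n+1)^{n-1}$ from Proposition~\ref{prop:caley}. Your bookkeeping remarks about the empty quotient and about restricting to non-empty words are sound and merely make explicit what the paper leaves implicit.
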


The bound of Corollary~\ref{cor:caley} is our first upper bound on the syntactic complexity of a star free language with $n$ quotients, but this bound is not tight in general because 
the set $A_n$ of all aperiodic transformations \emph{is not a semigroup} under composition for $n\ge 3$. For example, if $a=[1,3,1]$ and $b=[2,2,1]$, then $ab=[2,1,2]$, which contains the permutation $(1,2)$. 
Since the set of all transformations of a DFA by non-empty words \emph{is a semigroup,} our task is to find the size of the largest semigroup contained in $A_n$. 

First, let us consider small values of $n$:
\be
\item
If $n=1$, the only two languages, $\emp$ and $\Sig^*$,  are both star-free, since $\Sig^*=\ol{\emp}$.
Here $\sig(L)=1$,  the bound $2^0=1$ of Corollary~\ref{cor:caley} holds and is tight.
\medskip
\item
If $n=2$, there are 4 transformations possible. Since $[2,1]$ is a permutation, the automata of star-free languages can have at most 3 transformations, and the bound $3^1=3$ of Corollary~\ref{cor:caley} holds. We can assume that state 1 is initial, and state 2 is final, for if the initial and final states are interchanged, we still get the same transition semigroup, and hence the same syntactic semigroup. Suppose state 2 is reached from state 1 by input $a$, that is, $1a=2$; then we must have $2a=a$,
 since $a$ must be aperiodic.
\be
\item
If $\Sig=\{a\}$, then the language $L$ is $\ol{\eps}=aa^*$, and $\sig(L)=1$.
\item
If $\Sig=\{a,b\}$, there are three possibilities for $b$. First, if $1b=2b=2$, then this is the same as the case of a one-letter alphabet.
Second, if $1b=2b=1$, then the language is  $\Sig^*a=\ol{\emp}a$,
and $\sig(L)=2$. 
Third, if $1b=1$ and $2b=2$, then $L=\Sig^*a\Sig^*$, and $\sig(L)=2$.\
\item
If $\Sig=\{a,b,c\}$,  the language
$L=(b+c+a(a+c)^*b)^*a(a+c)^*=\Sig^*a(a+c)^*=\Sig^*a\ol{\Sig^*b\Sig^*}$
has $\sig(L)=3$. 
\ee
\ee
 
In summary, for $n=1$ and $2$, the bound of Corollary~\ref{cor:caley} is tight for $|\Sig|=1$ and $|\Sig|=3$, respectively. 
As we shall see later, this is not the case for $n\ge3$.

\section{Monotonicity in Transformations, Automata and Languages}
\label{sec:mono}

In this section we study the syntactic semigroups of languages accepted by monotonic and related automata. 

We denote by $C^n_k$ the binomial coefficient ``$n$ choose $k$''.

\subsection{Monotonic Transformations, DFA's and Languages}\label{subsec:mono}

We show in Section~\ref{sec:sc3} that a tight upper bound for $n=3$ is 10, and that this bound is met by a monotonic language (defined below).
This provides one reason to study monotonic automata and languages.
A second reason is the fact that all the tight upper bounds on the quotient complexity of operations on star-free languages are met by monotonic languages~\cite{BrLiu11}.

A full transformation $t$ of $Q$ is \emph{monotonic} if there exists a total order $\le$ on $Q$ such that, for all $p,q \in Q$, $p \le q$ implies $pt \le qt$. 
From now on we assume that the total order is the usual order on integers, and that $p<q$ means that $p\le q$ and $p\neq q$, as usual.

Let $M_Q$ be the set of all monotonic full transformations of $Q$. 
The following result~\cite{GoHo92,How71} summarizes some properties of the set of all monotonic transformations. It is restated slightly for our purposes, since the work in~\cite{GoHo92} does not consider the identity transformation to be monotonic.

\begin{theorem}[Gomes and Howie]
\label{thm:monotonic_gen}
When $n \ge 1$, the set $M_Q$ of all monotonic full transformations of a set $Q$ of $n$ elements is an aperiodic semigroup of cardinality
$$|M_Q|=f(n)=\sum_{k=1}^{n}{C^{n-1}_{k-1}}{C^n_k}= 
{C^{2n-1}_n},$$
 and it is generated by the set $H = \{a,b_1,\ldots,b_{n-1},c\}$ of $n+1$ transformations of $Q$, where
\be
\item $1a = 1$, $ia = i-1$ for $2 \le i \le n$; 
\item For $1\le i\le n-1$, $i b_i = i+1$, and $j b_i = j$ for all $j \neq i$; 
\item $c$ is the identity transformation. 
\ee
Moreover, the cardinality of the generating set cannot be reduced.
\end{theorem}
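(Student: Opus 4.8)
The plan is to prove three separate claims: that $M_Q$ is an aperiodic semigroup, that its cardinality is $C^{2n-1}_n$, that $H$ generates it, and that the generating set is minimal. I would organize the proof in that order, since later parts rely on the structural understanding built up earlier.

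First I would verify that $M_Q$ is closed under composition, which makes it a semigroup: if $p \le q$ then $pt_1 \le qt_1$ by monotonicity of $t_1$, and applying monotonicity of $t_2$ gives $(pt_1)t_2 \le (qt_1)t_2$, so $t_1t_2$ is monotonic. Aperiodicity follows because every monotonic transformation $t$ satisfies a useful fixed-point structure: restricting $t$ to its image yields a monotonic map of a totally ordered finite set into itself, and such a map on $\timg~t$ must stabilize, so no nontrivial cycle can exist. Concretely, any cycle $(i_1,\ldots,i_k)$ with $k \ge 2$ would force both an increase and a decrease around the cycle, contradicting order-preservation; hence every element of $M_Q$ is aperiodic.

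For the cardinality, I would count monotonic full transformations directly. A monotonic transformation on $\{1,\ldots,n\}$ is a non-decreasing sequence $i_1 \le i_2 \le \cdots \le i_n$ with each $i_k \in Q$. The number of such non-decreasing sequences of length $n$ with values in an $n$-element set is the multiset coefficient $C^{2n-1}_n$, which gives the closed form immediately. Matching this against the sum $\sum_{k=1}^{n} C^{n-1}_{k-1} C^n_k$ is a routine binomial identity (Vandermonde), where $k$ is the size of the image: choosing which $k$ values form the image in $C^n_k$ ways and choosing where the $k$ ``jumps'' occur among the $n-1$ gaps in $C^{n-1}_{k-1}$ ways. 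I would state this identity and cite it as standard rather than grinding through it.

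The generation claim is where I expect the main obstacle. The strategy is to show any monotonic $t$ factors into a product of elements of $H$. I would argue by a normal-form or induction argument on the ``defect'' of $t$, that is, $n - |\timg~t|$, or on how far $t$ departs from the identity. The transformation $a$ collapses state $1$ onto itself while shifting everything down, giving a way to merge states; the $b_i$ are the elementary ``upward'' singular moves that adjust individual images; and $c$ covers the identity. The delicate part is ordering these elementary moves so that every intermediate composite stays monotonic and so that an arbitrary prescribed non-decreasing image sequence is realized exactly. I would proceed by first reducing the image to a single value (or a contiguous block) using powers and compositions involving $a$, then building up the desired image using the $b_i$, verifying monotonicity is preserved at each stage. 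Establishing that this procedure terminates and covers all of $M_Q$ is the technical heart of the argument.

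Finally, for minimality of the generating set, I would identify each generator as indispensable by an algebraic invariant it alone supplies. A standard approach is to examine the group of units and the ``rank $n-1$'' layer: the permutations of $M_Q$ form the trivial group (only the identity), so $c$ must be present as it is the sole identity element; and among the non-identity generators, each $b_i$ and $a$ corresponds to a distinct transformation of image size $n-1$ that cannot be written as a product of the others, since any nontrivial product strictly decreases image size or fails to reach that particular transformation. Counting the transformations of image size exactly $n-1$ that are not expressible as products of smaller-image elements should yield exactly the $n$ generators $a, b_1, \ldots, b_{n-1}$, forcing $n+1$ generators in total with $c$. I would make this precise by showing these $n$ rank-$(n-1)$ transformations are precisely the indecomposable ones at that level, so none can be omitted.
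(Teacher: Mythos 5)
A preliminary remark: the paper does not prove this theorem at all---it is imported from Gomes and Howie~\cite{GoHo92,How71}, restated only so that the identity counts as monotonic---so there is no in-paper argument to match and your attempt stands or falls on its own. Your first two parts are sound. Closure under composition is immediate; aperiodicity follows because $t$ permutes the elements of any cycle and an order-preserving permutation of a finite chain is the identity; and the count is right, since monotonic full transformations are exactly the non-decreasing $n$-sequences over $\{1,\ldots,n\}$, of which there are $C^{2n-1}_n$, while the summand $C^{n-1}_{k-1}C^n_k$ counts those with image of size $k$, the identity being Vandermonde's.

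The generation claim, which you yourself flag as the technical heart, is genuinely missing, and the one concrete plan you offer would fail: you propose to ``first reduce the image to a single value\dots then build up the desired image using the $b_i$,'' but rank can only decrease along a composition (for $t = uv$ one has $\mathrm{img}\,t \subseteq \mathrm{img}\,v$ and $|\mathrm{img}\,t| \le |\mathrm{img}\,u|$), so once an intermediate product is constant, every extension of it is constant; no amount of composing with the $b_i$ rebuilds a larger image. A correct factorization must realize the kernel (a partition of $Q$ into consecutive blocks) and the image (a $k$-subset) simultaneously without ever dropping below rank $k$---for instance by first expressing the descending idempotents ${i+1 \choose i}$ in terms of $a$ and the $b_j$---and this is precisely the nontrivial content of Gomes--Howie that your sketch defers. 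Your minimality argument also contains a false step: ``any nontrivial product strictly decreases image size'' is wrong (for $n=3$, $b_2b_1 = [2,3,3]$ has rank $2$; cf.~Table~\ref{tab:sf3}), and consequently the rank-$(n-1)$ transformations ``not expressible as products of smaller-image elements'' number $n(n-1)$, not $n$. The correct invariant is the image: if $t$ has rank $n-1$ and $t = uv$ with $u,v$ non-identity monotonic, then, since the only rank-$n$ monotonic map is the identity, both factors have rank $n-1$, which forces $\mathrm{img}\,t = \mathrm{img}\,v$ (and dually $\ker t = \ker u$); hence the last generator in any factorization of $t$ has image exactly $\mathrm{img}\,t$, so a generating set must contain a rank-$(n-1)$ element for each of the $n$ images of size $n-1$. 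The identity is separately indispensable because $uv = c$ forces $u$ and $v$ to be monotonic bijections, hence identities. That repaired argument yields the lower bound of $n+1$ generators; as written, yours does not.
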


\begin{example}
For $n=1$ there is only one transformation $a=c=[1]$ and it is monotonic.
For $n=2$, the three generators are $a=[1,1]$, $b_1=[2,2]$ and $c=[1,2]$, and $M_Q$ consists of these three transformations.
For $n=3$, the four generators $a=[1,1,2]$, $b_1=[2,2,3]$, $b_2=[1,3,3]$, and $c=[1,2,3]$ generate all ten monotonic transformations, as shown in Table~\ref{tab:sf3}.

\begin{table}[ht]
\caption{The monotonic transformations for $n=3$.}
\label{tab:sf3}
\begin{center}
$
\begin{array}{| c|| c| c| c| c|| c|c|c|c|c|c|c|c|}    
\hline
\ \  \ \ &\ \ a \ \ &\ \ b_1 \ \ &\ \ b_2 \ \ & \ \ c \ \
&  \ aa \ &\ ab_1 \ &\ ab_2 \ &\ b_2a \ 
&\  b_2b_1 \  &   ab_1b_2 \\
\hline  \hline
  1
& 1 & 2  & 1   &1 	& 1 & 2   & 1 & 1   & 2 & 3 \\
\hline
2 
& 1 & 2  & 3   &2	& 1 & 2   & 1 & 2 	& 3 & 3\\
\hline
3
& 2 & 3  & 3   &3	& 1 & 2   & 3 & 2  	& 3 & 3\\
\hline
\hline
\end{array}
$
\end{center}
\end{table}
%

%For $n=4$, the following five transformations generate all 35 monotonic  \tr{}s:
%$[1,1,2,3]$, $[2,2,3,4]$, $[1,3,3,4]$, $[1,2,4,4]$ and $[1,2,3,4]$.
\end{example}
\smallskip

Now we turn to DFA's whose inputs perform monotonic transformations.
A~DFA is \emph{monotonic}~\cite{AV05} if all transformations in its transition semigroup are monotonic with respect to some fixed total order. 
Every monotonic DFA is aperiodic because
a monotonic \tr{} has no non-trivial permutations, and the composition of monotonic \tr{}s is monotonic. 
A regular language is \emph{monotonic} if its quotient DFA is monotonic.

Let us now define a DFA having as inputs the generators of $M_Q$:
 
\begin{definition}
\label{def:mon}
For $n \ge 1$, let $\cA_n = (Q,\Sig,\delta,1,\{1\})$ be the DFA in which
$Q = \{1,\ldots,n\}$, $\Sig=\{a,b_1,\ldots,b_{n-1},c\}$, and each letter in $\Sig$ performs the transformation defined in Theorem~\ref{thm:monotonic_gen}. 
\end{definition}

DFA $\cA_n$ is minimal, since state 1 is the only accepting state, and for $2\le i\le n$ only state $i$ accepts $a^{i-1}$.
From Theorem~\ref{thm:monotonic_gen} we have
\begin{corollary}
\label{cor:A}
For $n\ge 1$, the syntactic complexity $\sig(L)$ of any monotonic language $L$ with $n$ quotients satisfies
$\sig(L)\le {C^{2n-1}_n}.$
Moreover, this bound is met by the language $L(\cA_n)$ of Definition~\ref{def:mon}, and cannot be met by any monotonic language over an alphabet having fewer than $n+1$ letters.
\end{corollary}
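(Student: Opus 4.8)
The three assertions all follow from Theorem~\ref{thm:monotonic_gen}, so the plan is to reduce each one to a statement about the semigroup $M_Q$.

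First I would establish the upper bound. Let $\cA=(Q,\Sig,\delta,q_1,F)$ be the quotient DFA of a monotonic language $L$ with $n$ quotients; then its syntactic semigroup $T_L$ is isomorphic to the transition semigroup $T_\cA$, so $\sig(L)=|T_\cA|$. Since $L$ is monotonic, by definition there is a fixed total order on $Q$ with respect to which every \tr{} in $T_\cA$ is monotonic. Relabelling the $n$ states as $1,\dots,n$ in this order, I would identify $Q$ with $\{1,\dots,n\}$ and conclude $T_\cA\subseteq M_Q$. Hence $\sig(L)=|T_\cA|\le |M_Q|=C^{2n-1}_n$ by Theorem~\ref{thm:monotonic_gen}, which gives the claimed bound.

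Next I would show the bound is met by $L(\cA_n)$. The text already notes that $\cA_n$ is minimal, so it is the quotient DFA of $L(\cA_n)$ and $\kappa(L(\cA_n))=n$. By construction the letters of $\Sig$ perform exactly the transformations $a,b_1,\dots,b_{n-1},c$, that is, the generating set $H$ of Theorem~\ref{thm:monotonic_gen}. The transition semigroup $T_{\cA_n}$ consists of the transformations performed by non-empty words, which are precisely the products of elements of $H$, \ie{} the subsemigroup of $\cT_Q$ generated by $H$. By the theorem this subsemigroup is all of $M_Q$, so $\sig(L(\cA_n))=|T_{\cA_n}|=|M_Q|=C^{2n-1}_n$, and the bound is tight.

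Finally I would prove that no monotonic language over an alphabet with fewer than $n+1$ letters can meet the bound. Suppose $L$ is monotonic with $n$ quotients over $\Sig$ with $|\Sig|\le n$, and let $\cA$ be its quotient DFA. As above, $T_\cA\subseteq M_Q$, and $T_\cA$ is generated by the set $\{t_a\mid a\in\Sig\}$ of at most $n$ transformations. If $\sig(L)=|T_\cA|=|M_Q|$ held, then $T_\cA=M_Q$, so $\{t_a\mid a\in\Sig\}$ would be a generating set of $M_Q$ of cardinality at most $n<n+1$, contradicting the final sentence of Theorem~\ref{thm:monotonic_gen}, which asserts that the size $n+1$ of the generating set cannot be reduced. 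Therefore $\sig(L)<C^{2n-1}_n$. There is no deep step here, as the statement is genuinely a corollary; the one point requiring care is the identification $T_\cA\subseteq M_Q$, which needs the states relabelled so that the single fixed total order witnessing monotonicity becomes the usual integer order underlying $M_Q$. One small check for tightness is that the identity transformation $c\in M_Q$ must be realized by a \emph{non-empty} word; this holds in $\cA_n$ because the letter $c$ itself performs the identity, so $c$ lies in the transition \emph{semigroup} and not merely in the syntactic monoid.
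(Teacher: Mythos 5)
Your proposal is correct and takes essentially the same route as the paper, which treats the corollary as immediate from Theorem~\ref{thm:monotonic_gen} together with the minimality of $\cA_n$ noted just before the statement: the transition semigroup of a monotonic quotient DFA embeds in $M_Q$ after relabelling states along the witnessing order, $\cA_n$ realizes the full generating set $H$, and an alphabet of at most $n$ letters would yield a generating set of $M_Q$ smaller than the irreducible one of size $n+1$. Your extra observation that the identity is realized by the non-empty word $c$, so it genuinely lies in the transition \emph{semigroup}, is a correct and worthwhile detail that the paper leaves implicit.
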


\subsection{Monotonic  Partial Transformations and  IDFA's }
\label{subsec:parmono}

As we shall see, for $n\ge 4$ the maximal syntactic complexity cannot be reached by monotonic languages; hence we continue our search for larger semigroups of aperiodic transformations.

A partial transformation $t$ of $Q$ is \emph{monotonic} if there exists a total order $\le$ on $Q$ such that, for all $p,q \in \tdom~t$, $p \le q$ implies $pt \le qt$. As before, we assume that the total order on $Q$ is the usual order on integers. Let $PM_Q$ be the set of all monotonic partial transformations on $Q$ with respect to such an order. 
Gomes and Howie~\cite{GoHo92} showed the following result (again restated slightly):

\begin{theorem}[Gomes and Howie]\label{thm:po}
When $n \ge 1$, the set $PM_Q$ of all monotonic partial transformations of a set $Q$ of $n$ elements is an aperiodic semigroup of cardinality 
$$|PM_Q| = g(n)=\sum_{k=0}^n C^n_k C^{n+k-1}_k,$$
and it is generated by the set 
$I= \{a,b_1,\ldots,b_{n-1},c_1,\ldots,c_{n-1},d\}$ of $2n$ partial transformations of $Q$, where
\be 
\item 
 $ja = j-1$ for $j = 2,\ldots,n$; 
\item For $1\le i\le n-1$, $ib_i = i+1$ and $jb_i = j$ for $j = 1,\ldots,i-1,i+2,\ldots,n$; 
\item For $1\le i\le n-1$, $ic_i = i+1$, and $jc_i = j$ for all $j \neq i$; 
\item $d$ is the identity transformation. 
\ee
Moreover, the cardinality of the generating set cannot be reduced.
\end{theorem}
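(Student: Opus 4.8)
The plan is to treat the statement as four separate assertions: that $PM_Q$ is closed under composition, that it is aperiodic, that $|PM_Q|=g(n)$, and that $I$ is an irreducible generating set. The first three are routine. For \emph{closure}, given $s,t\in PM_Q$ and $p\le q$ in $\tdom~(st)$, monotonicity of $s$ gives $ps\le qs$, and since then $ps,qs\in\tdom~t$, monotonicity of $t$ gives $p(st)=(ps)t\le(qs)t=q(st)$, so $st$ is monotonic. For \emph{aperiodicity}, suppose a monotonic $t$ had a cycle of length $>1$ and let $i$ be its least element; then $i\le it$, and applying $t$ repeatedly with monotonicity yields $i\le it\le it^2\le\cdots\le it^k=i$, forcing $it=i$, a contradiction. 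For the \emph{cardinality}, a monotonic partial transformation is determined by its domain together with a non-decreasing map of that domain into $Q$; a domain of size $k$ can be chosen in $C^n_k$ ways, and a non-decreasing map from a $k$-element chain into the $n$-chain is a size-$k$ multiset over $Q$, of which there are $C^{n+k-1}_k$. Summing over $k=0,\dots,n$ (the term $k=0$ accounting for the nowhere-defined map) gives exactly $g(n)$.

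For \emph{generation} I would lean on Theorem~\ref{thm:monotonic_gen}. \textbf{Step 1:} show $M_Q\subseteq\langle I\rangle$. The generators of $M_Q$ are the singular maps sending $i\mapsto i+1$ (these are precisely the $c_i\in I$), the identity ($=d$), and the map fixing $1$ and sending $i\mapsto i-1$; a direct check shows this last map equals $c_1a$, so all of $M_Q$ lies in $\langle I\rangle$. \textbf{Step 2:} realize the domain restrictions. For $1\le j\le n$ let $\epsilon_j$ be the partial identity undefined exactly at $j$. I would exhibit each $\epsilon_j$ as a two-factor product of one partial generator with a full monotonic map: for $2\le m\le n$, $\epsilon_m=b_{m-1}g_m$, where $g_m\in M_Q$ sends $m\mapsto m-1$ and fixes the rest; and $\epsilon_1=ah$, where $h\in M_Q$ sends $i\mapsto i+1$ for $i<n$ and fixes $n$. \textbf{Step 3:} assemble. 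Given $t\in PM_Q$ with domain $D$, write $t=\rho_D\,\hat t$, where $\rho_D=\prod_{j\notin D}\epsilon_j$ is the partial identity on $D$ and $\hat t\in M_Q$ is any full monotonic extension of $t$ restricted to $D$ (such an extension exists because a monotonic partial map extends to a monotonic total one by copying nearest defined values). Both factors lie in $\langle I\rangle$, hence so does $t$; as the reverse inclusion $I\subseteq PM_Q$ is clear, $\langle I\rangle=PM_Q$.

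\emph{Irreducibility is the main obstacle.} The correct necessity criterion is not mere indecomposability—the naive test is too weak, since some generators do factor through other elements of the same rank—but rather that no generator lies in the subsemigroup generated by the other $2n-1$ elements. I would first note that $d$ is forced: it is the unique transformation of rank $n$ (a monotonic full bijection must be the identity), and a product never exceeds the rank of its factors, so $d$ admits no factorization into other elements. The remaining $2n-1$ generators all have rank $n-1$, and the crux is to show none can be built from the others. For this I would pass to the top $\mathcal{J}$-class of corank-$1$ elements and analyze its $\mathcal{R}$-classes (indexed by domain and kernel partition) and $\mathcal{L}$-classes (indexed by image): a product of rank-$(n-1)$ maps stays at rank $n-1$ only under stringent constraints on how images meet domains, and tracing which $\mathcal{R}$- and $\mathcal{L}$-classes cannot be reached from the others pins the number of indispensable corank-$1$ generators at $2n-1$. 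This Green's-relations bookkeeping—keeping the monotonicity of domains, kernels, and images aligned throughout—is where the real difficulty lies; together with $d$ it yields the lower bound $2n=|I|$, following Gomes and Howie~\cite{GoHo92}.
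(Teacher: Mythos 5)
A preliminary point: the paper does not prove this theorem at all --- it imports it (slightly restated to admit the identity) from Gomes and Howie~\cite{GoHo92}. So there is no internal proof to compare against; your reconstruction can only be judged on its own terms.

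The four parts you carry out in full are correct. Closure and aperiodicity are the standard arguments (for a putative cycle of length $k>1$ with least element $i$, monotonicity gives $i < it \le it^2 \le \cdots \le it^k = i$, a contradiction). The count is right: a monotonic partial transformation is a pair (domain of size $k$, weakly increasing map of a $k$-chain into the $n$-chain), weakly increasing maps correspond to $k$-multisets, and the $k=0$ term is the empty map, which does belong to $PM_Q$ and to $\langle I\rangle$ (it equals $a^n$). The generation argument also checks out in detail: $c_1a = [1,1,2,\ldots,n-1]$ recovers the one generator of $M_Q$ from Theorem~\ref{thm:monotonic_gen} not already in $I$, so $M_Q\subseteq\langle I\rangle$; the factorizations $\epsilon_1 = ah$ and $\epsilon_m = b_{m-1}g_m$ of the partial identities are correct under the paper's left-to-right composition; every monotonic partial map extends to a monotonic total map by copying nearest defined values; and $t=\rho_D\,\hat t$ then yields $PM_Q=\langle I\rangle$. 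This is a clean, self-contained route for everything except minimality.

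The gap is the final claim, and it is twofold. First, the criterion you adopt --- that no element of $I$ lies in the subsemigroup generated by the other $2n-1$ --- establishes only that $I$ is an \emph{irredundant} generating set. The theorem, as it is used in Corollary~\ref{cor:B} (``cannot be met by any partially monotonic language over an alphabet having fewer than $2n$ letters''), asserts the stronger \emph{rank} statement: \emph{every} generating set of $PM_Q$ has at least $2n$ elements. These are not equivalent in general, since a semigroup can possess irredundant generating sets strictly larger than its rank. Your treatment of $d$ is in fact of the right, stronger kind ($d$ is the unique rank-$n$ element and rank cannot increase under composition, so any generating set contains $d$), but for the rank-$(n-1)$ generators you would need to show that any generating set must meet $2n-1$ specified $\mathcal{R}$- and $\mathcal{L}$-classes of the top $\mathcal{J}$-class --- not merely that your particular $2n-1$ elements fail to factor through one another. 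Second, even the weaker claim is not actually carried out: the Green's-relations bookkeeping is named as the crux and then deferred to~\cite{GoHo92}. In fairness, that is exactly what the paper does for the entire theorem, so your proposal proves strictly more than the paper presents; but as a proof of the full statement it is incomplete at precisely the point you flagged, and with the irredundancy/rank conflation it would, if completed as written, still prove slightly less than what the theorem claims.
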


\begin{example}
For $n=1$, the two monotonic partial transformations are $a=[\Box]$, $d=[1]$.
For $n=2$, the eight monotonic partial transformations are generated by
$a=[\Box,1]$, $b_1=[2,\Box]$, $c_1=[2,2]$, and $d=[1,2]$.
For $n=3$, the 38 monotonic partial transformations are generated by
$a=[\Box,1,2]$, $b_1=[2,\Box,3]$, $b_2=[1,3,\Box]$, $c_1=[2,2,3]$, $c_2=[1,3,3]$ and $d=[1,2,3]$.
%For $n=4$, the 192 monotonic partial transformations are generated by
%$a=[\Box,1,2,3]$, 
%$b_1=[2,\Box,3,4]$, $b_2=[1,3,\Box,4]$, $b_3=[1,2,4,\Box]$,
%$c_1=[2,2,3,4]$, $c_2=[1,3,3,4]$, $c_2=[1,2,4,4]$,
% and $d=[1,2,3,4]$.

Partial transformations correspond to IDFA's. For example, the 
transformations $a=[\Box,1]$, $b=[2,\Box]$ and $c=[2,2]$ correspond to the transitions of the IDFA of Fig.~\ref{fig:pmnotm}~(a).
\end{example}

\begin{figure}[b]
\begin{center}
\setlength{\unitlength}{0.00043745in}
\begingroup\makeatletter\ifx\SetFigFont\undefined%
\gdef\SetFigFont#1#2#3#4#5{%
  \reset@font\fontsize{#1}{#2pt}%
  \fontfamily{#3}\fontseries{#4}\fontshape{#5}%
  \selectfont}%
\fi\endgroup%
{\renewcommand{\dashlinestretch}{30}
\begin{picture}(7764,2054)(0,-10)
\put(916,91){\makebox(0,0)[lb]{\smash{{\SetFigFont{8}{9.6}{\familydefault}{\mddefault}{\updefault}(a)}}}}
\put(1099.500,473.500){\arc{1785.567}{4.1547}{5.2701}}
\blacken\path(718.918,1313.772)(627.000,1231.000)(747.055,1260.778)(718.918,1313.772)
\put(6113.500,2303.500){\arc{2565.395}{1.1935}{1.9480}}
\blacken\path(6482.251,1043.648)(6586.000,1111.000)(6462.762,1100.395)(6482.251,1043.648)
\put(6113.500,443.500){\arc{1785.567}{4.1547}{5.2701}}
\blacken\path(5732.918,1283.772)(5641.000,1201.000)(5761.055,1230.778)(5732.918,1283.772)
\put(3743.000,684.000){\arc{450.000}{5.3559}{10.3521}}
\blacken\path(3559.021,750.417)(3608.000,864.000)(3511.333,786.828)(3559.021,750.417)
\put(2405.500,1156.000){\arc{425.000}{3.5789}{8.9874}}
\blacken\path(2315.181,996.293)(2213.000,1066.000)(2270.356,956.408)(2315.181,996.293)
\put(7415.500,1133.000){\arc{425.000}{3.5789}{8.9874}}
\blacken\path(7325.181,973.293)(7223.000,1043.000)(7280.356,933.408)(7325.181,973.293)
\put(310,1173){\ellipse{604}{604}}
\put(316,1168){\ellipse{512}{512}}
\put(1885,1173){\ellipse{604}{604}}
\put(5324,1143){\ellipse{604}{604}}
\put(3751,1156){\ellipse{604}{604}}
\put(6899,1143){\ellipse{604}{604}}
\put(5330,1138){\ellipse{512}{512}}
\blacken\path(4186.000,1186.000)(4066.000,1156.000)(4186.000,1126.000)(4186.000,1186.000)
\path(4066,1156)(5011,1156)
\path(5318,556)(5318,826)
\blacken\path(5348.000,706.000)(5318.000,826.000)(5288.000,706.000)(5348.000,706.000)
\path(301,571)(301,841)
\blacken\path(331.000,721.000)(301.000,841.000)(271.000,721.000)(331.000,721.000)
\path(6631,1328)(6630,1329)(6628,1330)
	(6624,1332)(6619,1335)(6610,1340)
	(6599,1347)(6585,1355)(6568,1364)
	(6548,1376)(6525,1389)(6499,1403)
	(6470,1418)(6439,1435)(6405,1452)
	(6370,1470)(6333,1489)(6294,1507)
	(6254,1526)(6212,1545)(6169,1563)
	(6124,1581)(6079,1599)(6032,1616)
	(5983,1632)(5932,1648)(5880,1663)
	(5825,1677)(5769,1690)(5709,1702)
	(5647,1712)(5583,1722)(5516,1730)
	(5446,1735)(5375,1739)(5303,1741)
	(5231,1740)(5160,1737)(5092,1732)
	(5025,1725)(4962,1716)(4902,1706)
	(4844,1695)(4789,1683)(4736,1669)
	(4686,1655)(4637,1640)(4591,1624)
	(4546,1607)(4502,1590)(4460,1573)
	(4420,1555)(4380,1537)(4342,1519)
	(4306,1500)(4271,1482)(4238,1465)
	(4207,1448)(4178,1432)(4151,1416)
	(4127,1402)(4106,1390)(4087,1379)
	(4071,1369)(4058,1361)(4048,1355)
	(4040,1350)(4029,1343)
\blacken\path(4114.133,1432.735)(4029.000,1343.000)(4146.346,1382.115)(4114.133,1432.735)
\put(312,1118){\makebox(0,0)[b]{\smash{{\SetFigFont{8}{9.6}{\familydefault}{\mddefault}{\updefault}$1$}}}}
\put(1887,1118){\makebox(0,0)[b]{\smash{{\SetFigFont{8}{9.6}{\familydefault}{\mddefault}{\updefault}$2$}}}}
\put(1129,1448){\makebox(0,0)[b]{\smash{{\SetFigFont{8}{9.6}{\familydefault}{\mddefault}{\updefault}$a$}}}}
\put(1085,841){\makebox(0,0)[b]{\smash{{\SetFigFont{8}{9.6}{\familydefault}{\mddefault}{\updefault}$b,c$}}}}
\put(2697,1118){\makebox(0,0)[b]{\smash{{\SetFigFont{8}{9.6}{\familydefault}{\mddefault}{\updefault}$c$}}}}
\put(3751,1088){\makebox(0,0)[b]{\smash{{\SetFigFont{8}{9.6}{\familydefault}{\mddefault}{\updefault}$3$}}}}
\put(5326,1088){\makebox(0,0)[b]{\smash{{\SetFigFont{8}{9.6}{\familydefault}{\mddefault}{\updefault}$1$}}}}
\put(6901,1088){\makebox(0,0)[b]{\smash{{\SetFigFont{8}{9.6}{\familydefault}{\mddefault}{\updefault}$2$}}}}
\put(3714,211){\makebox(0,0)[b]{\smash{{\SetFigFont{8}{9.6}{\familydefault}{\mddefault}{\updefault}$a,b,c$}}}}
\put(4523,1238){\makebox(0,0)[b]{\smash{{\SetFigFont{8}{9.6}{\familydefault}{\mddefault}{\updefault}$a$}}}}
\put(6137,788){\makebox(0,0)[b]{\smash{{\SetFigFont{8}{9.6}{\familydefault}{\mddefault}{\updefault}$b,c$}}}}
\put(7749,1073){\makebox(0,0)[b]{\smash{{\SetFigFont{8}{9.6}{\familydefault}{\mddefault}{\updefault}$c$}}}}
\put(5326,1832){\makebox(0,0)[b]{\smash{{\SetFigFont{8}{9.6}{\familydefault}{\mddefault}{\updefault}$b$}}}}
\put(5836,1380){\makebox(0,0)[b]{\smash{{\SetFigFont{8}{9.6}{\familydefault}{\mddefault}{\updefault}$a$}}}}
\put(5161,106){\makebox(0,0)[lb]{\smash{{\SetFigFont{8}{9.6}{\familydefault}{\mddefault}{\updefault}(b)}}}}
\put(1099.500,2333.500){\arc{2565.395}{1.1935}{1.9480}}
\blacken\path(1468.251,1073.648)(1572.000,1141.000)(1448.762,1130.395)(1468.251,1073.648)
\end{picture}
}
\end{center}
\caption{Partially monotonic DFA's: (a) incomplete; (b) complete.} 
\label{fig:pmnotm}
\end{figure}

An IDFA is \emph{monotonic} if all partial transformations in its transition semigroup are monotonic with respect to some fixed total order. 
A regular language is \emph{partially monotonic} if its quotient IDFA is monotonic.

\begin{definition}
\label{def:partial}
For $n \ge 1$, let $\cB_n = (Q,\Sig,\delta,1,\{1\})$ be the IDFA in which $Q = \{1,\ldots,n\}$,
$\Sig= \{a,b_1,\ldots,b_{n-1},c_1,\ldots,c_{n-1},d\}$, and each letter in $\Sig$ performs the partial transformation defined in Theorem~\ref{thm:po}.
\end{definition}

\begin{corollary}
\label{cor:B}
For $n\ge 1$, the syntactic complexity $\sig(L)$ of any partially monotonic regular language $L$ with $n$ non-empty quotients has an upper bound $\sig(L) \le \sum_{k=0}^{n}C^{n}_kC^{n+k-1}_k$.
Moreover, this bound is met by the language $L(\cB_n)$ of Definition~\ref{def:partial}, and cannot be met by any partially monotonic language over an alphabet having fewer than $2n$ letters.
\end{corollary}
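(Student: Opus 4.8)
The plan is to prove Corollary~\ref{cor:B} in three parts, mirroring the monotonic case and relying throughout on Theorem~\ref{thm:po}. For the upper bound, let $L$ be partially monotonic with $n$ non-empty quotients and let $\cI$ be its quotient IDFA on the state set $Q=\{1,\ldots,n\}$. By hypothesis $\cI$ is monotonic, so every partial transformation in its transition semigroup $T_\cI$ is monotonic; that is, $T_\cI\subseteq PM_Q$. Since $T_\cI$ is isomorphic to the syntactic semigroup of $L$, we obtain $\sig(L)=|T_\cI|\le|PM_Q|=g(n)=\sum_{k=0}^{n}C^n_kC^{n+k-1}_k$, which is the claimed bound.

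Second, to show the bound is met by $L(\cB_n)$, I would verify that $\cB_n$ of Definition~\ref{def:partial} is a minimal IDFA, so that its transition semigroup coincides with the syntactic semigroup of $L(\cB_n)$. The letters of $\cB_n$ perform exactly the $2n$ generators of Theorem~\ref{thm:po}, so $T_{\cB_n}$ is the subsemigroup of $PM_Q$ they generate, namely all of $PM_Q$; hence $|T_{\cB_n}|=g(n)$. Minimality reduces to reachability and distinguishability. For reachability, the word $c_1c_2\cdots c_{i-1}$ sends the initial state $1$ to $i$, since $jc_j=j+1$ and $c_j$ fixes every other state, so every state is reachable. For distinguishability, given $i<j$, the word $a^{i-1}$ sends $i$ to the final state $1$, while it sends $j$ to $j-i+1\ge 2$, a non-final state (and all intermediate images stay in $\{2,\ldots,n\}$, so the action is defined); thus $a^{i-1}$ lies in the language of state $i$ but not of state $j$. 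Hence $\cB_n$ is minimal and $\sig(L(\cB_n))=|T_{\cB_n}|=g(n)$.

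Third, for the alphabet bound, suppose a partially monotonic language $L$ with $n$ non-empty quotients attains $\sig(L)=g(n)$. Its quotient IDFA then has transition semigroup of cardinality $g(n)$ contained in $PM_Q$, forcing $T_\cI=PM_Q$. The distinct partial transformations performed by the individual letters form a generating set of $PM_Q$, and by the minimality-of-the-generating-set statement in Theorem~\ref{thm:po}, the minimum number of generators of $PM_Q$ is $2n$. Therefore there are at least $2n$ distinct letter transformations, and so at least $2n$ letters in the alphabet.

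I expect the only genuine work to be the reachability/distinguishability verification establishing minimality of $\cB_n$, together with the observation that attaining $g(n)$ forces the quotient IDFA to have exactly $n$ states with full transition semigroup $PM_Q$. The point deserving care is ruling out a shortcut on a smaller state set or alphabet: an IDFA with $m<n$ states has transition semigroup inside some $PM_{Q'}$ of size $g(m)<g(n)$, so $n$ states are truly required, and the alphabet bound then follows at once from the irreducibility of the generating set in Theorem~\ref{thm:po}.
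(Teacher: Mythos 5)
Your proposal is correct and takes essentially the same route as the paper, which states Corollary~\ref{cor:B} as an immediate consequence of Theorem~\ref{thm:po} (the Gomes--Howie result) together with Definition~\ref{def:partial}, without writing out a separate proof. The details you supply --- the containment $T_\cI \subseteq PM_Q$ for the upper bound, the reachability/distinguishability check giving minimality of $\cB_n$, and the observation that attaining $g(n)$ forces the transition semigroup to equal $PM_Q$ so that the irreducibility of the $2n$-element generating set yields the alphabet bound --- are exactly the routine verifications the paper leaves implicit (compare the one-sentence minimality remark it makes for $\cA_n$ just before Corollary~\ref{cor:A}).
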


We can also reformulate this result in terms of quotient DFA's.
In the following, we let $Q'=\{1,\ldots,n-1\}$, and $Q=Q'\cup \{n\}$.
A quotient DFA  $\cA=(Q,\Sig,\delta,q_1,F)$ is \emph{partially monotonic} if the corresponding quotient IDFA is monotonic. So a regular language is partially monotonic if its quotient DFA is partially monotonic.

\begin{example}

If we complete the transformations in Fig.~\ref{fig:pmnotm}~(a) by replacing the undefined entry $\Box$ by a new empty (or ``sink'') state 3, as usual, we obtain the DFA of Fig.~\ref{fig:pmnotm}~(b).
That DFA  is not monotonic, because $1<2$ implies $2< 3$ under input $b$, and $3<1$ under $a$, which in turn implies that $3<2$ under $b$.
A contradiction is also obtained if we assume that $2<1$.
However, this DFA is partially monotonic, since removal of the empty state 3 results in the IDFA of Fig.~\ref{fig:pmnotm}~(a), whose partial transformations are monotonic.

The DFA of Fig.~\ref{fig:pm}  has an empty state, and it is monotonic for the order shown. 
The IDFA consisting of states 2, 3, and 4 is also monotonic. 

\begin{figure}[h]
\begin{center}
\setlength{\unitlength}{0.00043745in}
\begingroup\makeatletter\ifx\SetFigFont\undefined%
\gdef\SetFigFont#1#2#3#4#5{%
  \reset@font\fontsize{#1}{#2pt}%
  \fontfamily{#3}\fontseries{#4}\fontshape{#5}%
  \selectfont}%
\fi\endgroup%
{\renewcommand{\dashlinestretch}{30}
\begin{picture}(5331,1625)(0,-10)
\put(277,1391){\makebox(0,0)[b]{\smash{{\SetFigFont{8}{9.6}{\familydefault}{\mddefault}{\updefault}$a,b$}}}}
\put(4235.500,-108.500){\arc{1785.567}{4.1547}{5.2701}}
\blacken\path(3854.918,731.772)(3763.000,649.000)(3883.055,678.778)(3854.918,731.772)
\put(3441.000,1069.000){\arc{450.000}{2.2143}{7.2105}}
\blacken\path(3624.979,1002.583)(3576.000,889.000)(3672.667,966.172)(3624.979,1002.583)
\put(5008.000,1069.000){\arc{450.000}{2.2143}{7.2105}}
\blacken\path(5191.979,1002.583)(5143.000,889.000)(5239.667,966.172)(5191.979,1002.583)
\put(306.000,1061.000){\arc{450.000}{2.2143}{7.2105}}
\blacken\path(489.979,994.583)(441.000,881.000)(537.667,958.172)(489.979,994.583)
\put(1873,604){\ellipse{604}{604}}
\put(5021,591){\ellipse{604}{604}}
\put(3446,591){\ellipse{604}{604}}
\put(3452,586){\ellipse{512}{512}}
\put(310,587){\ellipse{604}{604}}
\path(2188,604)(3133,604)
\blacken\path(3013.000,574.000)(3133.000,604.000)(3013.000,634.000)(3013.000,574.000)
\path(1866,12)(1866,282)
\blacken\path(1896.000,162.000)(1866.000,282.000)(1836.000,162.000)(1896.000,162.000)
\path(1559,603)(614,603)
\blacken\path(734.000,633.000)(614.000,603.000)(734.000,573.000)(734.000,633.000)
\put(1873,536){\makebox(0,0)[b]{\smash{{\SetFigFont{8}{9.6}{\familydefault}{\mddefault}{\updefault}$2$}}}}
\put(5023,536){\makebox(0,0)[b]{\smash{{\SetFigFont{8}{9.6}{\familydefault}{\mddefault}{\updefault}$4$}}}}
\put(4259,236){\makebox(0,0)[b]{\smash{{\SetFigFont{8}{9.6}{\familydefault}{\mddefault}{\updefault}$b$}}}}
\put(4242,873){\makebox(0,0)[b]{\smash{{\SetFigFont{8}{9.6}{\familydefault}{\mddefault}{\updefault}$a$}}}}
\put(5001,1391){\makebox(0,0)[b]{\smash{{\SetFigFont{8}{9.6}{\familydefault}{\mddefault}{\updefault}$b$}}}}
\put(3440,1383){\makebox(0,0)[b]{\smash{{\SetFigFont{8}{9.6}{\familydefault}{\mddefault}{\updefault}$a$}}}}
\put(1207,708){\makebox(0,0)[b]{\smash{{\SetFigFont{8}{9.6}{\familydefault}{\mddefault}{\updefault}$b$}}}}
\put(2645,701){\makebox(0,0)[b]{\smash{{\SetFigFont{8}{9.6}{\familydefault}{\mddefault}{\updefault}$a$}}}}
\put(312,522){\makebox(0,0)[b]{\smash{{\SetFigFont{8}{9.6}{\familydefault}{\mddefault}{\updefault}$1$}}}}
\put(3448,521){\makebox(0,0)[b]{\smash{{\SetFigFont{8}{9.6}{\familydefault}{\mddefault}{\updefault}$3$}}}}
\put(4235.500,1751.500){\arc{2565.395}{1.1935}{1.9480}}
\blacken\path(4604.251,491.648)(4708.000,559.000)(4584.762,548.395)(4604.251,491.648)
\end{picture}
}
\end{center}
\caption{Partially monotonic DFA  that is  monotonic and has an empty state.} 
\label{fig:pm}
\end{figure}
The transition semigroup of the DFA of Fig.~\ref{fig:pm} consists of the transformations
$a=[1,3,3,3]$, $b=[1,1,4,4]$, $ab=[1,4,4,4]$, and $ba=[1,1,3,3]$.
The transition semigroup of the corresponding IDFA with state set $\{2,3,4\}$ consists of 
$a=[3,3,3]$, $b=[\Box,4,4]$, $ab=[4,4,4]$, and $ba=[\Box,3,3]$, and the two semigroups are isomorphic.
\end{example}

Now consider the following semigroup $CM_Q$ of \emph{monotonic completed transformations} on a set $Q=\{1,\ldots,n\}$. Start with the semigroup $PM_{Q'}$ of all monotonic partial transformations on  
$Q'=\{1,\ldots,n-1\}$. Convert the partial transformations to full by adding $n$, replacing all the $\Box$'s by $n$'s, and letting the image of $n$ be $n$.
As we have noted before, there is one-to-one correspondence between the monotonic partial transformations  of $Q'$ and the corresponding full transformations of $Q$.
Thus semigroups $CM_Q$ and $PM_{Q'}$ are isomorphic.

\begin{example}
For $n=2$, the two monotonic completed transformations are $a=[2,2]$, $d=[1,2]$.
For $n=3$, the eight monotonic completed transformations are generated by
$a=[3,1,3]$, $b_1=[2,3,3]$, $c_1=[2,2,3]$, and $d=[1,2,3]$.
\end{example}

\begin{definition}
\label{def:partmon}
For $n \ge 1$, let $\cB'_n = (Q,\Sig,\delta,1,\{1\})$ be the DFA in which $Q = \{1,\ldots,n\}$,
$\Sig= \{a,b_1,\ldots,b_{n-2},c_1,\ldots,c_{n-2},d\}$, and each letter in $\Sig$ defines a transformation as follows: 
\be 
\item $1a = na = n$, and $ja = j-1$ for $j = 2,\ldots,n-1$; 
\item For $1\le i\le n-2$, $ib_i = i+1$, $(i+1)b_i = n$, and $jb_i = j$ for $j = 1,\ldots,i-1,i+2,\ldots,n$; 
\item For $1\le i\le n-2$, $ic_i = i+1$, and $jc_i = j$ for all $j \neq i$; 
\item $d$ is the identity transformation. 
\ee
\end{definition}

Theorem~\ref{thm:po} then implies the following result:
\begin{corollary}
\label{cor:B'}
For $n\ge 1$, the syntactic complexity $\sig(L)$ of any partially monotonic language $L$ with $n$ quotients satisfies $\sigma(L) \le g(n-1)=\sum_{k=0}^{n-1}C^{n-1}_kC^{n+k-2}_k$.
Moreover, this bound is met by the language $L(\cB'_n)$ of Definition~\ref{def:partmon}, and cannot be met by any partially monotonic language over an alphabet having fewer than $2n-2$ letters.
\end{corollary}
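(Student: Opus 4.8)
The plan is to derive the corollary directly from Theorem~\ref{thm:po}, transported through the isomorphism $CM_Q\cong PM_{Q'}$, where $Q'=\{1,\ldots,n-1\}$ and state $n$ is the sink. For the upper bound, take any partially monotonic language $L$ with $n$ quotients and let state $n$ be its empty quotient, so that the remaining non-empty quotients form $Q'$. By definition the quotient IDFA on $Q'$ is monotonic, hence every partial transformation induced by a non-empty word belongs to $PM_{Q'}$, and so the transition semigroup of this IDFA is a subsemigroup of $PM_{Q'}$. Since that transition semigroup is isomorphic to the syntactic semigroup $T_L$, as recorded in the preliminaries, I obtain $\sig(L)=|T_L|\le|PM_{Q'}|=g(n-1)$, which is the claimed bound.

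For tightness I would show that the DFA $\cB'_n$ of Definition~\ref{def:partmon} realizes all of $CM_Q$. First I check that state $n$ is a genuine sink: every generator fixes $n$, and since $n\notin F=\{1\}$ its language is empty, so $\cB'_n$ has the $n-1$ non-empty quotients $Q'$ together with the single empty quotient $n$, that is, $n$ quotients in all. Next I verify that the four families of generators in Definition~\ref{def:partmon} are exactly the \emph{completions} of the Gomes--Howie generators $a,b_1,\ldots,b_{n-2},c_1,\ldots,c_{n-2},d$ of $PM_{Q'}$ from Theorem~\ref{thm:po}: replacing each $\Box$ by $n$ and fixing $n$ sends the partial transformation $a=[\Box,1,\ldots,n-2]$, each $b_i$ and $c_i$, and $d$ precisely onto the transformations listed there. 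Since by Theorem~\ref{thm:po} these generate $PM_{Q'}$, their completions generate $CM_Q$, and because $CM_Q\cong PM_{Q'}$ the transition semigroup of $\cB'_n$ has cardinality $g(n-1)$.

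It then remains to confirm that $\cB'_n$ is minimal, so that it is the quotient DFA of $L(\cB'_n)$ and $\sig(L(\cB'_n))$ equals the size $g(n-1)$ of its transition semigroup. Here state $1$ is the unique accepting state, and I would exhibit for each state a word reaching it from $1$ and, for each pair of states, a separating word, exactly as in the minimality arguments for $\cA_n$ and $\cB_n$. The alphabet bound is then immediate from the final clause of Theorem~\ref{thm:po}: any language meeting the bound must have transition semigroup equal to all of $CM_Q\cong PM_{Q'}$, the transformations of the single letters generate this semigroup, and $PM_{Q'}$ cannot be generated by fewer than $2(n-1)=2n-2$ transformations; hence at least $2n-2$ letters are required.

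Because Theorem~\ref{thm:po} already supplies the cardinality, the generating set, and its minimality, the only genuine work lies in the tightness step. I expect the main obstacle to be the careful bookkeeping of the completion isomorphism $CM_Q\cong PM_{Q'}$ — ensuring that being undefined on $Q'$ corresponds exactly to being sent to the sink $n$ on $Q$, so that composition and monotonicity match on both sides — together with the routine but slightly tedious reachability and distinguishability checks needed for the minimality of $\cB'_n$.
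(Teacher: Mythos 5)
Your proposal is correct and follows essentially the same route as the paper, which states this corollary without a separate proof precisely because it is Theorem~\ref{thm:po} transported through the completion isomorphism $CM_Q \cong PM_{Q'}$: your upper bound via embedding the transition semigroup of the quotient IDFA in $PM_{Q'}$, your identification of the generators of $\cB'_n$ as the completions of the Gomes--Howie generators, the minimality check for $\cB'_n$, and the alphabet lower bound from the irreducibility of the generating set are exactly the intended argument. Your implicit assumption that state $n$ is the empty quotient matches the paper's own framing ($Q = Q' \cup \{n\}$ with $n$ the sink), so no gap relative to the paper's proof.
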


\subsection{Nearly Monotonic Transformations and DFA's}
\label{subsec:nearlymono}

Partially monotonic languages still don't reach the maximal bound for syntactic complexity, so we continue  the search for larger semigroups of aperiodic transformations.

Let $NM_Q$ be the union of $CM_Q$ and the set of all constant transformations of $Q$;
 we shall call these transformations \emph{nearly monotonic} with respect to the usual order on integers.
 
%Consider the following semigroup $NM_Q$ of full transformations on a set $Q=\{1,\ldots,n\}$; we shall call these transformations \emph{nearly monotonic}. Start with the semigroup $CM_Q$,
%and add the constant transformations ${Q\choose i}$ for $i=1,\ldots, n-1$.

\begin{theorem}\label{thm:nm}
When $n \ge 2$, the set $NM_Q$ of all nearly monotonic transformations of a set $Q$ of $n$ elements is an aperiodic semigroup of cardinality 
$$|NM_Q| = h(n)=\sum_{k=0}^{n-1}C^{n-1}_kC^{n+k-2}_k+n-1,$$
and it is generated by the set 
$J= \{a,b_1,\ldots,b_{n-2},c_1,\ldots,c_{n-2},d, e\}$ of $2n-2$ transformations of $Q$, where
\be 
\item $1a = na = n$, and $ja = j-1$ for $j = 2,\ldots,n-1$; 
\item For $1\le i\le n-2$, $ib_i = i+1$, $(i+1)b_i = n$, and $jb_i = j$ for $j = 1,\ldots,i-1,i+2,\ldots,n$; 
\item For $1\le i\le n-2$, $ic_i = i+1$, and $jc_i = j$ for all $j \neq i$; 
\item $d$ is the identity transformation;
\item $e$ is the constant transformation ${Q\choose 1}$.
\ee
Moreover, the cardinality of the generating set cannot be reduced.
\end{theorem}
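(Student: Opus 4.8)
The plan is to reduce everything to the already-understood semigroup $CM_Q$. Write $NM_Q=CM_Q\cup C$, where $C=\{{Q\choose j}\mid 1\le j\le n\}$ is the set of constant transformations, and recall that $CM_Q$ is isomorphic to $PM_{Q'}$; hence by Theorem~\ref{thm:po} it is an aperiodic semigroup of cardinality $g(n-1)$ generated by the $2n-2$ transformations $a,b_1,\ldots,b_{n-2},c_1,\ldots,c_{n-2},d$ of Definition~\ref{def:partmon}, this generating set being of minimal size (Corollary~\ref{cor:B'}). First I would show that $NM_Q$ is closed under composition, using the fact that constants absorb products on both sides: for any $s$ and any $j$ we have $i({Q\choose j}\,s)=(i{Q\choose j})s=js$, so ${Q\choose j}\,s={Q\choose js}$, while $i(s\,{Q\choose j})=(is){Q\choose j}=j$, so $s\,{Q\choose j}={Q\choose j}$. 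Thus any product with a factor in $C$ lies in $C$, and products within $CM_Q$ stay in $CM_Q$; hence $NM_Q$ is a semigroup. Aperiodicity is then immediate, since every element of $CM_Q$ is aperiodic and every constant is aperiodic.

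For the cardinality I would compute the overlap $CM_Q\cap C$. Every $t\in CM_Q$ fixes $n$, so the only constant it contains is ${Q\choose n}$ (the completion of the nowhere-defined partial transformation of $Q'$), whereas the remaining $n-1$ constants ${Q\choose 1},\ldots,{Q\choose n-1}$ move $n$ and therefore lie outside $CM_Q$. Hence $|NM_Q|=|CM_Q|+(n-1)=g(n-1)+(n-1)=h(n)$.

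For generation, since $a,b_i,c_i,d$ already generate $CM_Q$, it suffices to produce every constant. I would adjoin $e={Q\choose 1}$ and note that whenever $t\in CM_Q$ satisfies $1t=j$, then $i(e\,t)=(ie)t=1t=j$, so $e\,t={Q\choose j}$. Taking $t=d$ gives ${Q\choose 1}$, taking $t=c_1c_2\cdots c_{j-1}$ (which sends $1\mapsto 2\mapsto\cdots\mapsto j$) gives ${Q\choose j}$ for $2\le j\le n-1$, and ${Q\choose n}\in CM_Q$ already. Thus $\langle J\rangle$ contains $CM_Q$ and all of $C$; since $J\subseteq NM_Q$ and $NM_Q$ is closed, $\langle J\rangle=NM_Q$.

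The main obstacle is minimality, and here the naive approach fails: every constant is decomposable, since ${Q\choose m}={Q\choose k}\,{Q\choose m}$, so one cannot single out the constant generator $e$ by an indecomposability argument. Instead I would bound the two parts of an arbitrary generating set $G$ separately, writing $L=\{{Q\choose 1},\ldots,{Q\choose n-1}\}$. First, $G$ must contain a member of $L$: otherwise $G\subseteq CM_Q$, and as $CM_Q$ is a semigroup containing no element of $L$, $\langle G\rangle$ would miss $L$. Second, $G\cap CM_Q$ must generate $CM_Q$: any product that uses a member of $L$ is a constant by absorption, so the non-constant elements of $CM_Q$ can be produced only by products drawn from $G\cap CM_Q$; for $n\ge 3$ these non-constant elements include $a,b_i,c_i,d$, whence $\langle G\cap CM_Q\rangle\supseteq\langle a,b_i,c_i,d\rangle=CM_Q$, and the minimality for $CM_Q$ forces $|G\cap CM_Q|\ge 2n-2$. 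Combining the two bounds gives $|G|\ge(2n-2)+1$, which is the number of transformations in $J$, so the generating set cannot be reduced; the base case $n=2$, where $NM_Q$ has only three elements, is verified directly. As a check, the identity $d$ is also seen to be indispensable on its own, being the unique element of $NM_Q$ whose image is all of $Q$, while every other element has image of size at most $n-1$ and composition never enlarges the image.
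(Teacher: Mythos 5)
Your proof is correct and follows essentially the same route as the paper's: decompose $NM_Q = CM_Q \cup C$, invoke the Gomes--Howie result for $CM_Q \simeq PM_{Q'}$, use the absorption identities ${Q\choose j}s = {Q\choose js}$ and $s\,{Q\choose j} = {Q\choose j}$ for closure, aperiodicity and the count $|NM_Q| = g(n-1) + (n-1)$, and obtain the missing constants as $e\,t$ for $t$ with $1t = j$ (the paper uses $t = s_j = {Q'\choose j}{n\choose n}$ where you use $c_1\cdots c_{j-1}$, a cosmetic difference). Two points in your favour: your minimality argument --- any generating set must contain at least $2n-2$ non-constant elements because constants are absorbing and so cannot help produce the non-constant elements of $CM_Q$, plus at least one constant outside $CM_Q$ --- is a fully worked-out version of the paper's terse one-line justification, and your final bound $|G| \ge 2n-1 = |J|$ quietly corrects the misprint ``$2n-2$ transformations'' in the statement ($J$ has $2n-1$ elements, consistent with the $2n-1$ letters required in Theorem~\ref{thm:C}).
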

\begin{proof}
Let $I' = \{a, b_1, \ldots, b_{n-2}, c_1, \ldots, c_{n-2}, d\}$. We know from~\cite{GoHo92} that the semigroup $CM_Q=\langle I' \rangle$ generated by $I'$ is isomorphic to $PM_{Q'}$, where $Q' = \{1,\ldots,n-1\}$. Let $J = I' \cup \{e\}$, and $K = \langle J \rangle \setminus \langle I' \rangle$.

The composition of a constant transformation with any transformation is a constant. Thus any transformation in $NM_Q$ is either constant or is in the semigroup $CM_Q$. Since constant transformations are aperiodic, and $CM_Q$ is  aperiodic,  $NM_Q$ is also aperiodic. 

Note that $K$ contains only constant transformations. Since $\langle I' \rangle \simeq PM_{Q'}$, for all $j \in Q$, $s_j = {Q' \choose j}{n \choose n} \in \langle I' \rangle \subseteq \langle J \rangle$. Thus $es_j = {Q \choose j} \in \langle J \rangle$ for all $1 \le j \le n$. Among these constant transformations, only ${Q \choose n} \in \langle I' \rangle$; so $K$ contains ${Q \choose j}$ for $1 \le j \le n-1$ and $|K| = n-1$. 
Thus $h(n) = |\langle J \rangle| = |PM_{Q'}| + n - 1$. 

Since the cardinality of $I'$ cannot be reduced, and $e\not \in \langle I' \rangle$, also the cardinality of $J$ cannot be reduced. 
\qed
\end{proof}

\begin{example}
For $n=2$, the three nearly monotonic transformations are $a=[2,2]$, $d=[1,2]$ and $e=[1,1]$.
For $n=3$, the ten nearly monotonic transformations are generated by
$a=[3,1,3]$, $b_1=[2,3,3]$, $c_1=[2,2,3]$,  $d=[1,2,3]$, and $e=[1,1,1]$.
\end{example}

Let $\cA = (Q, \Sig, \delta, q_1, F)$ be a DFA. An input $a \in \Sig$ is {\em constant} if $\delta(p, a) = \delta(q, a)$ for all $p, q \in Q$, that is, $a$ performs a constant transformation of $Q$. Then $\cA$ is {\em nearly monotonic} if, after removing constant inputs, the resulting DFA $\cA'$ is partially monotonic. 
A regular language is \emph{nearly monotonic} if its quotient DFA is nearly monotonic.

\begin{definition}\label{def:nm}
For $n \ge 2$, let $\cC_n = (Q,\Sig,\delta,1,\{1\})$ be a DFA, where $Q = \{1,\ldots,n\}$, $\Sig = \{a,b_1,\ldots,b_{n-2},c_1,\ldots,c_{n-2},d,e\}$, and each letter in $\Sig$ performs the transformation defined in Theorem~\ref{thm:nm}.
\end{definition}

Theorem~\ref{thm:nm} now leads us to the following result:
\begin{theorem}
\label{thm:C}
For $n\ge 2$, the syntactic complexity $\sig(L)$ of any nearly monotonic language $L$ with $n$ quotients satisfies 
$\sigma(L) \le h(n)=\sum_{k=0}^{n-1}C^{n-1}_kC^{n+k-2}_k+n-1$.
Moreover, this bound is met by the language $L(\cC_n)$ of Definition~\ref{def:nm}, and cannot be met by any nearly monotonic language over an alphabet having fewer than $2n-1$ letters.
\end{theorem}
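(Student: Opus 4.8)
The plan is to reduce all three assertions to Theorem~\ref{thm:nm}, which already establishes that $NM_Q = CM_Q \cup \{\mbox{constant transformations}\}$ is an aperiodic semigroup of cardinality $h(n)$, generated by $J$, and that this generating set is irreducible. For the upper bound I would argue that the transition semigroup of the quotient DFA $\cA$ of any nearly monotonic language $L$ is contained in $NM_Q$. After relabelling the states so that the witnessing total order is the usual order on $\{1,\ldots,n\}$ with $n$ as the designated state, the definition of nearly monotonic says that deleting the constant inputs leaves a partially monotonic DFA $\cA'$; hence each non-constant input performs a monotonic completed transformation, \ie{} an element of $CM_Q$, while the constant inputs perform constant transformations. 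Every generator of $T_\cA$ therefore lies in $NM_Q$, and since $NM_Q$ is a semigroup we get $T_\cA \subseteq NM_Q$, so $\sig(L) = |T_\cA| \le |NM_Q| = h(n)$.

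For tightness I would show that $\cC_n$ is minimal, so that it is the quotient DFA of $L(\cC_n)$ and $\sig(L(\cC_n)) = |T_{\cC_n}| = |\langle J\rangle| = |NM_Q| = h(n)$ by Theorem~\ref{thm:nm}. Reachability is immediate: the composition $c_1 c_2 \cdots c_{i-1}$ maps $1$ to $i$ for $2 \le i \le n-1$, and $a$ maps $1$ to $n$. For distinguishability, note that $1$ is the unique final state, so it is separated from every other state by $\eps$. For any state $p \in \{2,\ldots,n-1\} \subseteq Q'$, the word realizing the transformation of $CM_Q \subseteq T_{\cC_n}$ whose underlying monotonic partial transformation of $Q'$ has domain $\{p\}$ and $p \mapsto 1$ sends $p$ to the final state $1$ and every other state to the designated state $n \neq 1$. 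Since any pair of distinct non-final states contains at least one element of $\{2,\ldots,n-1\}$, all states are pairwise distinguishable, and $\cC_n$ is minimal.

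For the alphabet lower bound, suppose a nearly monotonic language $L$ with $n$ quotients attains $\sig(L) = h(n)$. Then $T_\cA \subseteq NM_Q$ together with $|T_\cA| = h(n) = |NM_Q|$ forces $T_\cA = NM_Q$, so the set of distinct transformations performed by the letters of $\Sig$ generates $NM_Q$. By the irreducibility clause of Theorem~\ref{thm:nm}, any generating set of $NM_Q$ has at least $|J| = 2n-1$ elements; since distinct transformations require distinct letters, $|\Sig| \ge 2n-1$.

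The main obstacle is the minimality argument for $\cC_n$, specifically distinguishability. Here I must be careful that the designated state $n$ is no longer a genuine sink, because the constant input $e$ maps $n$ to the final state $1$, so the quotient of $n$ is nonempty and $\cC_n$ has exactly $n$ nonempty quotients; and I must confirm that the single-point completed transformations used as separating words genuinely lie in $T_{\cC_n}$, which follows only because $T_{\cC_n}$ equals all of $NM_Q$. Once minimality is secured, the upper-bound and alphabet claims are essentially bookkeeping on top of Theorem~\ref{thm:nm}.
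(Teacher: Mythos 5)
Your proposal is correct and follows essentially the same route as the paper: everything is reduced to Theorem~\ref{thm:nm}, with the only substantive work being the minimality of $\cC_n$, which makes its transition semigroup $\langle J\rangle = NM_Q$ the syntactic semigroup of $L(\cC_n)$. The paper's proof records only this minimality step (reaching state $i$ by $b_1\cdots b_{i-1}$ and separating states by the explicit words $a^{i-1}$, where you use $c_1\cdots c_{i-1}$ and single-point transformations drawn from $CM_Q$), and leaves the upper bound and the $2n-1$-letter claim implicit in Theorem~\ref{thm:nm}, exactly as you spell them out.
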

\begin{proof}
Note that 1 is the initial state of $\cC_n$. For $2 \le i \le n-1$, state $i$ is reached by  $w_i = b_1\cdots b_{i-1}$. State $n$ is reached by  $w_{n-1}b_{n-2}$. Thus all states are reachable. For $1 \le i \le n-1$, state $i$ accepts the word $a^{i-1}$, while all other states reject it. Also, state $n$ rejects $a^i$ for all $i \ge 0$. So all $n$ states are distinguishable, and $\cC_n$ is minimal. Therefore $L$ has quotient complexity $\kappa(L) = n$. The syntactic semigroup of $L$ is generated by $J$, and so $L$ has syntactic complexity $\sigma(L) = h(n) = \sum_{k=0}^{n-1}C^{n-1}_kC^{n+k-2}_k + n - 1$. 
\qed
\end{proof}

\smallskip

Although we cannot prove that $NM_Q$ is the largest semigroup of aperiodic transformations, we can show that no transformation can be added to $NM_Q$ without destroying aperiodicity.

A set $S = \{T_1, T_2, \ldots, T_k\}$ of transformation semigroups is a {\em chain}  if $T_1 \subset T_2 \subset \cdots \subset T_k$. Semigroup $T_k$ is the largest in $S$, and we denote it by $\max(S) = T_k$. The following result shows that the syntactic semigroup $T_{L(\cC_n)} = T_{\cC_n}$ of $L(\cC_n)$ in Definition~\ref{def:nm} is a local maximum among aperiodic subsemigroups of $\cT_Q$.

\begin{proposition}
\label{prop:pminchain}
Let $S$ be a chain of aperiodic subsemigroups of $\cT_Q$. If $T_{\cC_n} \in S$, then $T_{\cC_n} = \max(S)$.
\end{proposition}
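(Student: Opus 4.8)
The plan is to reduce the statement to the local maximality property announced just before the proposition---that no transformation can be added to $T_{\cC_n}=NM_Q$ without destroying aperiodicity---and then to prove that property by an explicit construction. First I would observe that since $S$ is a chain, all its members are comparable and $\max(S)$ is its largest member, so $NM_Q\subseteq\max(S)$. If $NM_Q\neq\max(S)$, then $NM_Q\subsetneq\max(S)$, so there is a transformation $t\in\max(S)\setminus NM_Q$. As $\max(S)$ is a semigroup containing $NM_Q$ and $t$, it contains all of $\langle NM_Q\cup\{t\}\rangle$. Hence it suffices to prove the following claim: for every $t\in\cT_Q\setminus NM_Q$, the semigroup $\langle NM_Q\cup\{t\}\rangle$ contains a transformation with a cycle of length~$2$. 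Granting the claim, $\max(S)$ would contain a non-aperiodic transformation, contradicting aperiodicity of the members of $S$; therefore $NM_Q=\max(S)$.

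To prove the claim I would use the characterization that $NM_Q=CM_Q\cup\{\text{constants}\}$, where (via the isomorphism with $PM_{Q'}$) a transformation $t$ lies in $CM_Q$ iff $nt=n$ and the partial map $\tilde t$ on $Q'=\{1,\dots,n-1\}$, obtained by treating $kt=n$ as ``undefined'', is order-preserving. Since $t\notin NM_Q$, it is non-constant and falls into one of two exhaustive cases, in each of which I would fold $t$ into a $2$-cycle by post-composing with a suitable $g\in CM_Q\subseteq NM_Q$. In Case~A, $\tilde t$ is not monotonic: there are $p<q$ in $Q'$ with $i:=pt>qt=:j$ and $i,j\in Q'$. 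I would take the step map $g$ with $kg=p$ for $k\le i-1$, $kg=q$ for $i\le k\le n-1$, and $ng=n$; it is order-preserving on $Q'$ and fixes $n$, so $g\in CM_Q$, and then $p(tg)=ig=q$, $q(tg)=jg=p$, giving the $2$-cycle $(p,q)$ (nothing about $nt$ is used here). In Case~B, $\tilde t$ is monotonic but $nt=m\neq n$; since $t$ is non-constant not everything maps to $m$, and because $n\mapsto m$ the witness lies in $Q'$, so I may pick $x\in Q'$ with $xt\neq m$. Defining $g$ by $mg=x$, $kg=n$ for $k\in Q'\setminus\{m\}$, and $ng=n$ yields a single-point partial map $m\mapsto x$, hence $g\in CM_Q$, and then $n(tg)=mg=x$ while $x(tg)=(xt)g=n$ (as $xt\neq m$ is sent to the sink), giving the $2$-cycle $(n,x)$.

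The case split is exhaustive: a non-constant $t$ with $\tilde t$ monotonic and $nt=n$ is exactly an element of $CM_Q$, hence excluded, so every $t\in\cT_Q\setminus NM_Q$ is covered by Case~A or Case~B; this establishes the claim and with it the proposition. I expect the reduction from the chain to be routine; the real work lies in the claim, and the main obstacle is twofold: first, verifying that the case distinction is genuinely exhaustive against the precise definition of $CM_Q$ (in particular treating $kt=n$ as the sink correctly), and second, checking in each case that the auxiliary map $g$ is honestly a monotonic completed transformation. The most delicate point is Case~B, where the non-constancy hypothesis is exactly what guarantees an $x\in Q'$ with $xt\neq m$, and it is this inequality that makes the orbit $n\mapsto x\mapsto n$ close into a genuine $2$-cycle rather than collapsing.
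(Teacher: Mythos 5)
Your proof is correct, and its core engine is the same as the paper's: given a transformation $t\in\max(S)\setminus T_{\cC_n}$, compose it with elements of $T_{\cC_n}=NM_Q$ that send all irrelevant states to the sink $n$, thereby manufacturing a $2$-cycle inside $\max(S)$ and contradicting aperiodicity. But your execution is genuinely more complete. The paper asserts that $t\notin T_{\cC_n}$ yields $i<j\neq n$ with $it>jt$ and $it,jt\neq n$, and then builds a two-sided sandwich $s=\lambda t\tau$; this treats only your Case~A, and the assertion is in fact not a consequence of $t\notin NM_Q$: a non-constant $t$ whose restriction $\tilde t$ to $Q'$ is monotonic but with $nt\neq n$ --- for instance $t=[1,2,\ldots,n-1,1]$, which fixes $Q'$ pointwise --- admits no such witness pair, yet lies outside $NM_Q$. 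Your Case~B handles exactly this possibility, using the single-point completed map $g$ (with $mg=x$ and everything else sent to $n$) to close the orbit $n\mapsto x\mapsto n$ into the cycle $(n,x)$, and you correctly identify non-constancy of $t$ as what furnishes $x\in Q'$ with $xt\neq m$; your exhaustiveness check against the characterization ($t\in CM_Q$ iff $nt=n$ and $\tilde t$ is order-preserving) is also right. So your case split repairs a genuine lacuna in the published argument. Two further small improvements in your version: you use only a one-sided composition $tg$, whereas the paper's pre-composition by $\lambda$ is superfluous (already $t\tau$ contains the cycle $(i,j)$); and you claim nothing about the points outside the cycle, whereas the paper's identity $s=(i,j){P\choose n}$ can fail when $nt\in\{it,jt\}$ --- harmless for the conclusion, but an overstatement your write-up avoids.
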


\begin{proof}
Suppose $\max(S) = T_k$ for some aperiodic subsemigroup $T_k$ of $\cT_Q$, and $T_k \neq T_{\cC_n}$. Then there exists $t \in T_k$ such that $t \not\in T_{\cC_n}$; thus there exists $i, j \in Q$ such that $i < j \neq n$ but $it > jt$, and $it, jt \neq n$. Let $\tau \in \cT_Q$ be the transformation of $Q$ such that $(jt)\tau = i$, $(it)\tau = j$, and $h\tau = n$ for all $h \neq i, j$; then $\tau \in T_{\cC_n}$. Let $\lambda \in \cT_Q$ be such that $i\lambda = i$, $j\lambda = j$, and $h\lambda = n$ for all $h \neq i,j$; then also $\lambda \in T_{\cC_n}$. Since $T_k = \max(S)$, $T_{\cC_n} \subset T_k$ and $\tau, \lambda \in T_k$. Then $s = \lambda t \tau$ is also in $T_k$. However, $is = i(\lambda t \tau) = j$, $js = j(\lambda t \tau) = i$, and $hs = n$ for all $h \neq i, j$; then $s = (i, j){P \choose n}$, where $P=Q\setminus\{i,j\}$,  is not aperiodic, a contradiction. Therefore $T_{\cC_n} = \max(S)$. \qed
\end{proof}

\subsection{Containment and Closure Properties}

Let ${\mathbb L}_M$, ${\mathbb L}_{PM}$, ${\mathbb L}_{NM}$, and ${\mathbb L}_{SF}$ be the classes of monotonic, partially monotonic, nearly monotonic and star-free languages.
Then the following holds:
\begin{proposition}
$${\mathbb L}_M \subsetneq {\mathbb L}_{PM} \subsetneq {\mathbb L}_{NM} \subsetneq {\mathbb L}_{SF}.$$
\end{proposition}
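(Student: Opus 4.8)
The plan is to prove the three inclusions straight from the definitions of the four classes, working at the level of the transformations performed by the quotient automata, and then to settle each of the three strictness claims by producing a separating language. For the inclusions I would argue as follows. For ${\mathbb L}_M \subseteq {\mathbb L}_{PM}$, let $L$ be monotonic with quotient DFA $\cA$ order-preserving for a total order $\preceq$; the quotient IDFA is $\cA$ with its empty state $s$ (if any) deleted, and each of its partial transformations is the restriction to $Q\setminus\{s\}$ of a $\preceq$-monotonic full transformation, hence is $\preceq$-monotonic, so $L$ is partially monotonic. For ${\mathbb L}_{PM}\subseteq{\mathbb L}_{NM}$, deleting the constant inputs of a partially monotonic DFA leaves its surviving transformations monotonic, so the reduced automaton $\cA'$ is again partially monotonic, which is exactly what near monotonicity requires. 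For ${\mathbb L}_{NM}\subseteq{\mathbb L}_{SF}$, I would reuse the argument in the proof of Theorem~\ref{thm:nm}: in a nearly monotonic DFA any word either uses a constant letter, so its transformation is constant, or uses only non-constant letters, so its transformation lies in the transition semigroup of $\cA'$, which is aperiodic by Theorem~\ref{thm:nm}. Since constants are aperiodic, the whole transition semigroup is aperiodic and $L$ is star-free.

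For strictness, the first two separations are essentially already present in the paper. A language in ${\mathbb L}_{PM}\setminus{\mathbb L}_M$ is recognised by the minimal DFA of Fig.~\ref{fig:pmnotm}(b): it is shown there to be non-monotonic for every order, yet deleting its empty state produces the monotonic IDFA of Fig.~\ref{fig:pmnotm}(a). A language in ${\mathbb L}_{NM}\setminus{\mathbb L}_{PM}$ is $L(\cC_3)$ of Definition~\ref{def:nm}: it is nearly monotonic by Theorem~\ref{thm:C}, but the constant input $e$ carries every state to the accepting state $1$, so $\cC_3$ has no empty state; its quotient IDFA is therefore $\cC_3$ itself, and one checks this is monotonic for no order (the letters $a$, $b_1$, $c_1$ would force states $2$, $1$, $3$ respectively to be an endpoint of the order, yet a three-element order has only two ends). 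Hence $L(\cC_3)$ is not partially monotonic.

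The third separation, ${\mathbb L}_{NM}\subsetneq{\mathbb L}_{SF}$, is the crux and needs a fresh construction. I would take $Q=\{1,2,3,4\}$ with inputs $x=[1,1,3,3]$ and $y=[2,4,2,4]$, initial state $1$, and $F=\{1\}$. A short computation shows this DFA is minimal and that $x,y$ generate the semigroup $\{x,\,y,\,[1,3,1,3],\,[1,1,1,1],\,[2,2,2,2],\,[3,3,3,3],\,[4,4,4,4]\}$, all of whose elements are aperiodic, so the language is star-free. It is not nearly monotonic: $x$ and $y$ have disjoint fixed-point sets ($\{1,3\}$ and $\{4\}$), so no state is a common trap, the minimal DFA has no empty state, and there are no constant inputs to remove; thus near monotonicity would reduce to $x$ and $y$ being simultaneously monotonic full transformations under one order. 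But $x$ forces $\{1,2\}$ and $\{3,4\}$ to be separated blocks of the order while $y$ forces $\{1,3\}$ and $\{2,4\}$ to be separated, and no linear order on four elements can realise both block structures.

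The main obstacle is precisely this last construction. Forcing two transformations to demand incompatible orders is easy; the difficulty is to do so while keeping the \emph{entire generated semigroup} aperiodic, since, as the example $ab=[2,1,2]$ in Section~\ref{sec:aper} shows, a product of aperiodic maps can create a permutation. The verification thus has two delicate components: computing the closure $\langle x,y\rangle$ and confirming every element is aperiodic, and ruling out near monotonicity for \emph{all} orders and \emph{all} choices of removable sink at once — which is what the disjoint-fixed-point argument (no available sink) together with the incompatible-block argument (no available order) is designed to handle.
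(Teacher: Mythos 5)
Your proposal is correct, and its skeleton is the same as the paper's: the inclusions are handled by definition-chasing (the paper does not even write them out), and strictness is settled by explicit witness automata at each level. Your first witness is literally the paper's (the DFA of Fig.~\ref{fig:pmnotm}(b)). Your second witness, $L(\cC_3)$, is essentially the paper's as well: the paper adds the constant input $[1,1,1]$ to the DFA of Fig.~\ref{fig:pmnotm}(b), and $\cC_3$ consists of exactly those transformations plus an identity letter; your endpoint-counting argument (each of $a=[3,1,3]$, $b_1=[2,3,3]$, $c_1=[2,2,3]$ pins a different state --- $2$, $1$, $3$ respectively --- to an end of the order, and a three-element chain has only two ends) is a valid alternative to the paper's direct order contradiction; in fact $a$ and $b_1$ alone already admit no common order. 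The only genuinely new content is your third witness: the paper instead uses the three-state two-letter DFA of Fig.~\ref{fig:SFnotNM} and argues along exactly the lines you propose --- aperiodic, no constant input, no empty state, hence near monotonicity would collapse to monotonicity, which fails for every order. Your four-state replacement checks out in full: the DFA with $x=[1,1,3,3]$, $y=[2,4,2,4]$, initial state $1$, $F=\{1\}$ is minimal; $xx=x$, $xy=[2,2,2,2]$, $yy=[4,4,4,4]$, $yx=[1,3,1,3]$, $(yx)^2=[1,1,1,1]$, $yyx=[3,3,3,3]$, and every further product is constant, so $\langle x,y\rangle$ is the seven-element aperiodic semigroup you list; the fixed-point sets $\{1,3\}$ of $x$ and $\{4\}$ of $y$ are disjoint, so the DFA has no sink and hence no empty state and no constant letters to remove; and the block argument is sound, since monotonicity of a full transformation with two image points forces one preimage class to lie entirely below the other, so the bottom two elements of any admissible order would have to be simultaneously one of $\{1,2\},\{3,4\}$ (from $x$) and one of $\{1,3\},\{2,4\}$ (from $y$), which is impossible. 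So your route is the paper's reduction with a different witness: the paper's Fig.~\ref{fig:SFnotNM} buys a smaller (three-state) example, while yours buys a verification that is fully self-contained in the text, independent of any figure.
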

\begin{proof}
The DFA of Fig.~\ref{fig:pmnotm}~(b) is partially monotonic but not monotonic.
If we add to that DFA the input $d= [1,1,1]$, then the new DFA is nearly monotonic but not partially monotonic.
The DFA of Figure~\ref{fig:SFnotNM} is aperiodic, as one can verify.
It has no constant input; hence it must be partially monotonic if it is nearly monotonic. 
Since it has no empty state, it must be monotonic if it is partially monotonic.
However, if $1<2$. then $3<2$ by $a$, and also $2<3$ by $b$.
 We get a similar contradiction if we set $2 < 1$. Therefore $\cA$ is not monotonic. 
\qed
\end{proof}

\begin{figure}[b]
\begin{center}
\setlength{\unitlength}{0.00043745in}
\begingroup\makeatletter\ifx\SetFigFont\undefined%
\gdef\SetFigFont#1#2#3#4#5{%
  \reset@font\fontsize{#1}{#2pt}%
  \fontfamily{#3}\fontseries{#4}\fontshape{#5}%
  \selectfont}%
\fi\endgroup%
{\renewcommand{\dashlinestretch}{30}
\begin{picture}(4601,1959)(0,-10)
\put(2156,87){\makebox(0,0)[b]{\smash{{\SetFigFont{8}{9.6}{\familydefault}{\mddefault}{\updefault}$a$}}}}
\put(2950.500,348.500){\arc{1785.567}{4.1547}{5.2701}}
\blacken\path(2569.918,1188.772)(2478.000,1106.000)(2598.055,1135.778)(2569.918,1188.772)
\put(4252.500,1038.000){\arc{425.000}{3.5789}{8.9874}}
\blacken\path(4162.181,878.293)(4060.000,948.000)(4117.356,838.408)(4162.181,878.293)
\put(2163.000,538.000){\arc{425.906}{5.1487}{10.5592}}
\blacken\path(2003.357,628.776)(2073.000,731.000)(1963.444,673.575)(2003.357,628.776)
\put(2161,1048){\ellipse{604}{604}}
\put(588,1061){\ellipse{604}{604}}
\put(3736,1048){\ellipse{604}{604}}
\put(3739,1051){\ellipse{512}{512}}
\path(903,1061)(1848,1061)
\blacken\path(1728.000,1031.000)(1848.000,1061.000)(1728.000,1091.000)(1728.000,1031.000)
\path(12,1053)(282,1053)
\blacken\path(162.000,1023.000)(282.000,1053.000)(162.000,1083.000)(162.000,1023.000)
\blacken\path(3349.230,1267.551)(3468.000,1233.000)(3379.463,1319.378)(3349.230,1267.551)
\path(3468,1233)(3456,1240)(3447,1245)
	(3436,1252)(3422,1260)(3405,1269)
	(3385,1281)(3362,1294)(3336,1308)
	(3307,1323)(3276,1340)(3242,1357)
	(3207,1375)(3170,1394)(3131,1412)
	(3091,1431)(3049,1450)(3006,1468)
	(2961,1486)(2916,1504)(2869,1521)
	(2820,1537)(2769,1553)(2717,1568)
	(2662,1582)(2606,1595)(2546,1607)
	(2484,1617)(2420,1627)(2353,1635)
	(2283,1640)(2212,1644)(2140,1646)
	(2068,1645)(1997,1642)(1929,1637)
	(1862,1630)(1799,1621)(1739,1611)
	(1681,1600)(1626,1588)(1573,1574)
	(1523,1560)(1474,1545)(1428,1529)
	(1383,1512)(1339,1495)(1297,1478)
	(1257,1460)(1217,1442)(1179,1424)
	(1143,1405)(1108,1387)(1075,1370)
	(1044,1353)(1015,1337)(988,1321)
	(964,1307)(943,1295)(924,1284)
	(908,1274)(895,1266)(885,1260)
	(877,1255)(872,1252)(869,1250)
	(867,1249)(866,1248)
\put(588,993){\makebox(0,0)[b]{\smash{{\SetFigFont{8}{9.6}{\familydefault}{\mddefault}{\updefault}$1$}}}}
\put(2163,993){\makebox(0,0)[b]{\smash{{\SetFigFont{8}{9.6}{\familydefault}{\mddefault}{\updefault}$2$}}}}
\put(3738,993){\makebox(0,0)[b]{\smash{{\SetFigFont{8}{9.6}{\familydefault}{\mddefault}{\updefault}$3$}}}}
\put(1360,1143){\makebox(0,0)[b]{\smash{{\SetFigFont{8}{9.6}{\familydefault}{\mddefault}{\updefault}$b$}}}}
\put(2974,693){\makebox(0,0)[b]{\smash{{\SetFigFont{8}{9.6}{\familydefault}{\mddefault}{\updefault}$b$}}}}
\put(4586,978){\makebox(0,0)[b]{\smash{{\SetFigFont{8}{9.6}{\familydefault}{\mddefault}{\updefault}$b$}}}}
\put(2163,1737){\makebox(0,0)[b]{\smash{{\SetFigFont{8}{9.6}{\familydefault}{\mddefault}{\updefault}$a$}}}}
\put(2673,1285){\makebox(0,0)[b]{\smash{{\SetFigFont{8}{9.6}{\familydefault}{\mddefault}{\updefault}$a$}}}}
\put(2950.500,2208.500){\arc{2565.395}{1.1935}{1.9480}}
\blacken\path(3319.251,948.648)(3423.000,1016.000)(3299.762,1005.395)(3319.251,948.648)
\end{picture}
}
\end{center}
\caption{A DFA $\cA$ that is aperiodic but not monotonic.} 
\label{fig:SFnotNM}
\end{figure}

\begin{proposition}
The classes ${\mathbb L}_M$, ${\mathbb L}_{PM}$, and ${\mathbb L}_{NM}$
are closed under complementation and left quotients, but not under union, symmetric difference, intersection, difference, concatenation and star.
\end{proposition}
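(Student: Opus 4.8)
The plan is to treat the two positive closures (complement, left quotient) by the invariance of the underlying transition structure, and the five negative results by exhibiting witnesses; throughout I use that ${\mathbb L}_M\subseteq{\mathbb L}_{PM}\subseteq{\mathbb L}_{NM}\subseteq{\mathbb L}_{SF}$, so a single pair of monotonic languages whose image under an operation lies outside ${\mathbb L}_{NM}$ refutes closure for all three classes simultaneously. For complement, the starting observation is that $\overline L=\Sig^*\setminus L$ arises from the quotient DFA of $L$ by replacing $F$ with $Q\setminus F$ and changing nothing else, so the transition function $\delta$ — and hence the transition semigroup of the quotient DFA — is unchanged, while swapping final and non-final states preserves minimality. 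For ${\mathbb L}_M$ this already finishes the case: monotonicity is a property of the full transformations in $T_{\cA}$, which are identical for $L$ and $\overline L$. For ${\mathbb L}_{PM}$ and ${\mathbb L}_{NM}$ I would argue through the syntactic semigroup: negating both sides of $uxv\in L\Leftrightarrow uyv\in L$ shows $L$ and $\overline L$ have the same Myhill congruence, so $T_L$ and $T_{\overline L}$ coincide as abstract semigroups, and membership in these classes is witnessed by the structure of that semigroup (a subsemigroup of $M_Q$, an image of $PM_{Q'}$ under completion, or such augmented by constants). The delicate point — which I expect to be the main obstacle — is reconciling this with the quotient\emph{-IDFA} formulation: the IDFA is formed by deleting the empty (reject-sink) state, and complementation turns that state into an accept-sink while the former accept-sink (if any) becomes the new empty state. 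One must check that a monotonic linear order on the non-empty quotients of $L$ transfers to those of $\overline L$ across this sink-swap, treating the sink as an extremal element of the order; this is the technical heart of the argument.

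Left quotients are cleaner. The quotient DFA of $L_w$ is exactly the subautomaton of that of $L$ reachable from $\delta(q_1,w)=L_w$, with $\delta$ restricted to this forward-closed set of states; since its quotients $L_{wx}$ are among the quotients of $L$, it is already minimal. Restricting a monotonic full or partial transformation to a $\delta$-invariant subset, with the induced suborder, is again monotonic; a constant input restricts to a constant; and the empty state, if reachable, stays the reject-sink, so no sink-swap occurs. Hence each of the three classes is closed under left quotients by inheritance from the ambient monotonic (I)DFA.

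For non-closure, take $L=\{aa\}$ for star: it is finite and monotonic (its quotient DFA is a chain with $a=[2,3,4,4]$), yet $L^*=(aa)^*$ has a two-state syntactic monoid generated by a transposition and so is not even star-free, settling star for all three classes at once. Having established closure under complement, non-closure under union, intersection, difference, and symmetric difference become equivalent by De Morgan's laws and $K\setminus L=K\cap\overline L$, so one Boolean witness suffices. Here I would take two monotonic $K,L$ and inspect the direct-product DFA computing $K\cap L$: its transitions are componentwise monotonic for the product order on state pairs, but the product of two chains is only a partial order, so $K$ and $L$ can be chosen so the reachable part of the product admits no compatible linear order, has no constant input, and has no empty state — precisely the configuration of Fig.~\ref{fig:SFnotNM}, where the inputs force both $3<2$ and $2<3$. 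The resulting combination is then star-free but not nearly monotonic. Concatenation is handled by an analogous small pair whose concatenation DFA carries the same conflicting order constraints. The only genuine work in this part is selecting the witnesses and verifying, by a short computation of the syntactic semigroup, that the offending non-monotonic transformation indeed appears.
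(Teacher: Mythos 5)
Your skeleton matches the paper's proof: complement via invariance of the transition structure, left quotients via subautomata, the witness $\{aa\}$ with $(aa)^*$ not star-free for star, and small explicit witnesses for the binary operations (the paper uses the disjoint monotonic languages $L_1=a(\eps\cup \Sig^*b)$ and $L_2=b\Sig^*b$, whose union is the language of Fig.~\ref{fig:SFnotNM}, and the pair $\cA_3,\cA_4$ with product DFA in Fig.~\ref{fig:concatpm} for concatenation). But there is a genuine gap, and it sits exactly where you flagged it. The complement case for $\mathbb{L}_{PM}$ and $\mathbb{L}_{NM}$, which you leave open, cannot be closed by the fix you sketch (``treating the sink as an extremal element of the order''), because the difficulty is not transferring an order across a sink-swap: complementation can make the empty state disappear entirely. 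A quotient $\ol{L}_w$ of $\ol L$ is empty iff $L_w=\Sig^*$, so if $L$ has an empty quotient but no quotient equal to $\Sig^*$, then $\ol L$ has \emph{no} empty quotient, and the quotient IDFA of $\ol L$ is the complete quotient DFA itself; monotonicity must then hold on all of $Q$, not on $Q$ minus the old sink. Concretely, take the DFA of Fig.~\ref{fig:pmnotm}~(b) with final state $2$: it is partially monotonic, no quotient equals $\Sig^*$, and no letter of the complemented DFA ($a=[3,1,3]$, $b=[2,3,3]$, $c=[2,2,3]$) is constant; so the quotient IDFA of the complement is the complete three-state DFA, which the paper itself proves is not monotonic under any total order. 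Thus the transfer you ask for genuinely fails under the paper's definitions, and your instinct that this is ``the technical heart'' is sound: the paper's own one-line justification (``the properties of monotonicity do not involve final states'') is airtight only for $\mathbb{L}_M$, where the quotient DFA of $\ol L$ carries literally the same transformations; for the IDFA-based classes the removed state is determined by the final states, and a complete proof would have to either repair the argument or reinterpret the definitions. Your proposal, as written, inherits this hole rather than resolving it.

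Two further defects in the negative half. Your claim that, once complement closure is in hand, non-closure under union, intersection, difference and symmetric difference are ``equivalent by De Morgan's laws'' is wrong for symmetric difference: $\oplus$ together with complement generates only affine Boolean functions, and $K\cup L$ is not affine, so there is no De Morgan-style reduction between $\oplus$ and the other three. The paper instead chooses its union witnesses $L_1,L_2$ \emph{disjoint}, so that $L_1\oplus L_2=L_1\cup L_2$ and the same witness settles $\oplus$; your argument needs this observation (and a disjointness check) added. Finally, for union/intersection and for concatenation you only describe how witnesses ``can be chosen'' so that the product or concatenation automaton realizes the conflicting order constraints of Fig.~\ref{fig:SFnotNM}; exhibiting concrete witnesses and verifying the conflict is essentially the entire content of the paper's proof of the negative statements, so on these operations your text is a plausible plan rather than a proof.
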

\begin{proof}
Closure under complementation is obvious, since the properties of monotonicity do not involve final states. Also, the left quotient of any monotonic language is defined by a DFA that is a subautomaton of the original language, and closure under quotients follows.

The language of the DFA of Figure~\ref{fig:SFnotNM} is not nearly monotonic. However, it is the union of languages $L_1=a(\eps\cup \Sig^*b)$ and $L_2=b\Sig^*b$.
The DFA of $L_1$ is shown in Figure~\ref{fig:pm}, and it is monotonic.
One verifies that $L_2$ is also monotonic. 
This proves that each of the classes is not closed under union. 
Since $L_1$ and $L_2$ are disjoint, each  class is not closed under symmetric difference.
Since each class is closed under complement, it can be closed under neither intersection nor difference.

Consider now DFA's: $\cA_3 = (\{1,2,3\}, \{a,b\}, \delta_3, 1, \{2\})$, where $a = [1, 1, 1]$ and $b = [2, 3, 3]$, and $\cA_4 = (\{1,2,3\}, \{a,b\}, \delta_4, 1, \{2\})$, where $a = [1, 1, 3]$ and $b = [2, 3, 3]$. Both $\cA_3$ and $\cA_4$ are monotonic.
 Let $L = L(\cA_3)L(\cA_4)$. Then $L$ is star-free with quotient DFA $\cA = (\{1,2,3,4,5,6\}, \{a, b\}, \delta, 1, \{3, 6\})$ shown in Figure~\ref{fig:concatpm}, where $a = [1, 5, 5, 1, 5, 5]$ and $b = [2, 3, 4, 4, 6, 3]$. 

\begin{figure}[hbt]
\begin{center}
\setlength{\unitlength}{0.00052493in}
\begingroup\makeatletter\ifx\SetFigFont\undefined%
\gdef\SetFigFont#1#2#3#4#5{%
  \reset@font\fontsize{#1}{#2pt}%
  \fontfamily{#3}\fontseries{#4}\fontshape{#5}%
  \selectfont}%
\fi\endgroup%
{\renewcommand{\dashlinestretch}{30}
\begin{picture}(5632,2804)(0,-10)
\put(2930,114){\makebox(0,0)[b]{\smash{{\SetFigFont{10}{12.0}{\familydefault}{\mddefault}{\updefault}$a$}}}}
\put(2959.500,4202.654){\arc{8388.394}{1.0360}{2.1056}}
\blacken\path(941.057,560.451)(822.000,594.000)(911.262,508.371)(941.057,560.451)
\put(597.000,1233.643){\arc{340.715}{2.1274}{7.2974}}
\blacken\path(735.781,1202.668)(687.000,1089.000)(783.533,1166.340)(735.781,1202.668)
\put(2950.500,1633.500){\arc{1705.019}{4.1376}{5.3781}}
\blacken\path(2577.698,2433.106)(2487.000,2349.000)(2606.607,2380.530)(2577.698,2433.106)
\put(1782.000,2214.000){\arc{300.000}{0.6435}{5.6397}}
\blacken\path(1782.521,2336.012)(1902.000,2304.000)(1811.642,2388.471)(1782.521,2336.012)
\put(2170,806){\ellipse{604}{604}}
\put(3747,806){\ellipse{512}{512}}
\put(3745,806){\ellipse{604}{604}}
\put(597,819){\ellipse{604}{604}}
\put(5322,819){\ellipse{604}{604}}
\put(2172,2214){\ellipse{604}{604}}
\put(3747,2201){\ellipse{512}{512}}
\put(3745,2201){\ellipse{604}{604}}
\path(12,819)(282,819)
\blacken\path(162.000,789.000)(282.000,819.000)(162.000,849.000)(162.000,789.000)
\path(912,819)(1857,819)
\blacken\path(1737.000,789.000)(1857.000,819.000)(1737.000,849.000)(1737.000,789.000)
\path(2487,819)(3432,819)
\blacken\path(3312.000,789.000)(3432.000,819.000)(3312.000,849.000)(3312.000,789.000)
\path(4062,819)(5007,819)
\blacken\path(4887.000,789.000)(5007.000,819.000)(4887.000,849.000)(4887.000,789.000)
\path(3567,1044)(2397,1989)
\blacken\path(2509.203,1936.938)(2397.000,1989.000)(2471.503,1890.261)(2509.203,1936.938)
\path(2172,1089)(2172,1899)
\blacken\path(2202.000,1779.000)(2172.000,1899.000)(2142.000,1779.000)(2202.000,1779.000)
\blacken\path(3717.000,1209.000)(3747.000,1089.000)(3777.000,1209.000)(3717.000,1209.000)
\path(3747,1089)(3747,1899)
\path(2487,2214)(3432,2214)
\blacken\path(3312.000,2184.000)(3432.000,2214.000)(3312.000,2244.000)(3312.000,2184.000)
\put(2172,751){\makebox(0,0)[b]{\smash{{\SetFigFont{10}{12.0}{\familydefault}{\mddefault}{\updefault}2}}}}
\put(3747,751){\makebox(0,0)[b]{\smash{{\SetFigFont{10}{12.0}{\familydefault}{\mddefault}{\updefault}3}}}}
\put(597,751){\makebox(0,0)[b]{\smash{{\SetFigFont{10}{12.0}{\familydefault}{\mddefault}{\updefault}1}}}}
\put(5322,751){\makebox(0,0)[b]{\smash{{\SetFigFont{10}{12.0}{\familydefault}{\mddefault}{\updefault}4}}}}
\put(3162,1539){\makebox(0,0)[b]{\smash{{\SetFigFont{10}{12.0}{\familydefault}{\mddefault}{\updefault}$a$}}}}
\put(3747,2146){\makebox(0,0)[b]{\smash{{\SetFigFont{7}{8.4}{\familydefault}{\mddefault}{\updefault}6}}}}
\put(2172,2146){\makebox(0,0)[b]{\smash{{\SetFigFont{7}{8.4}{\familydefault}{\mddefault}{\updefault}5}}}}
\put(1452,2169){\makebox(0,0)[b]{\smash{{\SetFigFont{10}{12.0}{\familydefault}{\mddefault}{\updefault}$a$}}}}
\put(597,1479){\makebox(0,0)[b]{\smash{{\SetFigFont{10}{12.0}{\familydefault}{\mddefault}{\updefault}$a$}}}}
\put(2937,2582){\makebox(0,0)[b]{\smash{{\SetFigFont{10}{12.0}{\familydefault}{\mddefault}{\updefault}$a$}}}}
\put(2922,1989){\makebox(0,0)[b]{\smash{{\SetFigFont{10}{12.0}{\familydefault}{\mddefault}{\updefault}$b$}}}}
\put(3957,1494){\makebox(0,0)[b]{\smash{{\SetFigFont{10}{12.0}{\familydefault}{\mddefault}{\updefault}$b$}}}}
\put(1984,1434){\makebox(0,0)[b]{\smash{{\SetFigFont{10}{12.0}{\familydefault}{\mddefault}{\updefault}$a$}}}}
\put(4490,917){\makebox(0,0)[b]{\smash{{\SetFigFont{10}{12.0}{\familydefault}{\mddefault}{\updefault}$b$}}}}
\put(2892,894){\makebox(0,0)[b]{\smash{{\SetFigFont{10}{12.0}{\familydefault}{\mddefault}{\updefault}$b$}}}}
\put(1317,894){\makebox(0,0)[b]{\smash{{\SetFigFont{10}{12.0}{\familydefault}{\mddefault}{\updefault}$b$}}}}
\put(5322,1479){\makebox(0,0)[b]{\smash{{\SetFigFont{10}{12.0}{\familydefault}{\mddefault}{\updefault}$b$}}}}
\put(5322.000,1233.643){\arc{340.715}{2.1274}{7.2974}}
\blacken\path(5460.781,1202.668)(5412.000,1089.000)(5508.533,1166.340)(5460.781,1202.668)
\end{picture}
}
\end{center}
\caption{Quotient DFA $\cA$ of concatenation of partially monotonic DFA's $\cA_3$ and $\cA_4$.} 
\label{fig:concatpm}
\end{figure}

However, $\cA$ is not nearly monotonic.  If $1 < 6$, then   $1 < 5$ by $a$,  $2 < 3$ by input $b$, $3 < 4$ by  $b$ again, and $5 < 1$ by input $a$, which is a contradiction. We get a similar contradiction by assuming $5 < 1$. Hence the class of partially monotonic DFA's is not closed under product. 

Lack of closure under star follows since each class contains the language $\{aa\}$, and the star of $\{aa\}$ is not star-free.
\qed
\end{proof}

\section{Syntactic Complexity of Star-Free Languages for $n=3$}\label{sec:sc3}

We now prove that the bound reached by  monotonic automata is an upper bound
for star-free languages with $n=3$.

For $n=3$, there are $3^3=27$ possible transformations. The transformations $[2,3,1]$, $[3,1,2]$ are cycles of length 3; whereas 
$$ [1,3,2], [2,1,1],  [2,1,2], [2,1,3], [2,3,2],  [3,1,1], [3,2,1], [3,3,1], [3,3,2]$$ have cycles of length 2.
This leaves 16 aperiodic transformation as potentially possible for star-free languages. 

We say that two aperiodic transformations $a$ and $b$ \emph{conflict} if $ab$ or $ba$ has a cycle.
Of the 16 transformations, transformations $[1,2,3]$, $[1,1,1]$, $[2,2,2]$, $[3,3,3]$ cannot create any conflicts.
Hence we consider only the remaining 12; their multiplication table is shown in Table~\ref{tab:pairs}, where the bold face entries indicate conflicting pairs.
Since each of the 12 transformations appears as a product of some other transformations, none is essential. 

\begin{table}[ht]
\caption{Conflicting pairs for $n=3$.}
\label{tab:pairs}
\begin{center}
$
\begin{array}{| c || c  | c | c  | c | c | c|c|c|c|c|c|c|c|c|}    
\hline
\  \
& \ 112 \ &\ 113 \ & \ 121 \    & \ 122 \ & \   131 \ &\ 133\ &\ 221 \ &\ 223 \ & \ 233 \ & \ 312\ &\ 322 \ & \ 323 \ \\
\hline\hline
\ 112 \
& \ 111 \ &\ 111 \ & \ 112 \   & \ 112 \ & \   113 &113&222 &222 & 223 &\bf 331 &\bf 332 &\bf 332 \\
\hline
\ 113 \
& \ 112 \ &\ 113 \ & \ 111 \   & \ 112 \ & \  111 & 113 & 221 &223 & 223 & 333 &\bf 332 & 333 \\
\hline
\ 121 \
& \ 111 \ &\ 111 \ & \ 121 \   & \ 121 \ & \  131 & 131 &222 & 222 & \bf 232 & 313 & 323 & 323 \\
\hline
\ 122 \
& \ 111 \ &\ 111 \ & \ 122 \   & \ 122 \ & \  133 &133 & 222 & 222 & 233 & \bf 311 & 322 & 322 \\
\hline
\ 131 \
& \ 121 \ &\ 131 \ & \ 111 \   & \ 121 \ & \  111 & 131 & \bf 212 & \bf 232 & \bf 232 & 333 & 323 & 333 \\
\hline
\ 133 \
& \ 122 \ &\ 133 \ & \ 111 \   & \ 122 \ & \  111 & 133 & \bf 211&  233 & 233 & 333 & 322 & 333 \\
\hline
\ 221 \
& \ 111 \ &\ 111 \ & \ 221 \   & \ 221 \ & \  \bf 331 & \bf 331 & 222 &  222 & \bf 332 & 113 & 223 & 223 \\
\hline
\ 223 \
& \ 112 \ &\ 113 \ & \ 221 \   & \ 222 \ & \  \bf 331 & 333 & 221 &  223 &  333 & 113 & 222 & 223 \\
\hline
\ 233 \
& \ 122 \ &\ 133 \ & \ \bf 211 \   & \ 222 \ & \  \bf 311 & 333 & \bf 211 &  233 &  333 & 133 & 222 & 233* \\
\hline
\ 313 \
& \ \bf 212 \ &\ 313 \ & \ 111 \   & \ \bf 212 \ & \  111 & 313 & 121 &  323 &  323 & 333 & \bf 232 & 333* \\
\hline
\ 322 \
& \ \bf 211 \ &\ \bf 311 \ & \ 122 \   & 222 \ & \  133 & 333 & 122 &  322 & 333 &\bf 311 & 222 & 322* \\
\hline
\ 323 \
& \ \bf 212 \ &\ 313 \ & \ 121 \   & 222 \ & \  131 & 333 & 121 &  323 & 333 & 313 & 222 & 323 \\
\hline
\end{array}
$
\end{center}
\end{table}

Each of the following six  transformations has only one conflict: $$ [1,1,3], [1,2,1],  [1,2,2], [1,3,3], [2,2,3],  [3,2,3].$$
There are also two conflicting triples shown in dotted lines in the conflict graph of Fig.~\ref{fig:conflicts3}:
$( [1,1,3],   [1,2,2],  [3,2,3])$  and
$( [1,2,1],  [1,3,3], [2,2,3])$;
% The conflict graph is shown in Fig.~\ref{fig:conflicts3}.
 \begin{figure}[ht]
\begin{center}
\setlength{\unitlength}{0.00043745in}
\begingroup\makeatletter\ifx\SetFigFont\undefined%
\gdef\SetFigFont#1#2#3#4#5{%
  \reset@font\fontsize{#1}{#2pt}%
  \fontfamily{#3}\fontseries{#4}\fontshape{#5}%
  \selectfont}%
\fi\endgroup%
{\renewcommand{\dashlinestretch}{30}
\begin{picture}(7029,2019)(0,-10)
\put(161,461){\makebox(0,0)[lb]{\smash{{\SetFigFont{8}{9.6}{\familydefault}{\mddefault}{\updefault}$322$}}}}
\put(1362,1370){\arc{210}{1.5708}{3.1416}}
\put(1362,1527){\arc{210}{3.1416}{4.7124}}
\put(1857,1527){\arc{210}{4.7124}{6.2832}}
\put(1857,1370){\arc{210}{0}{1.5708}}
\path(1257,1370)(1257,1527)
\path(1362,1632)(1857,1632)
\path(1962,1527)(1962,1370)
\path(1857,1265)(1362,1265)
\put(2630,1370){\arc{210}{1.5708}{3.1416}}
\put(2630,1527){\arc{210}{3.1416}{4.7124}}
\put(3125,1527){\arc{210}{4.7124}{6.2832}}
\put(3125,1370){\arc{210}{0}{1.5708}}
\path(2525,1370)(2525,1527)
\path(2630,1632)(3125,1632)
\path(3230,1527)(3230,1370)
\path(3125,1265)(2630,1265)
\put(3889,1370){\arc{210}{1.5708}{3.1416}}
\put(3889,1527){\arc{210}{3.1416}{4.7124}}
\put(4384,1527){\arc{210}{4.7124}{6.2832}}
\put(4384,1370){\arc{210}{0}{1.5708}}
\path(3784,1370)(3784,1527)
\path(3889,1632)(4384,1632)
\path(4489,1527)(4489,1370)
\path(4384,1265)(3889,1265)
\put(5149,1370){\arc{210}{1.5708}{3.1416}}
\put(5149,1527){\arc{210}{3.1416}{4.7124}}
\put(5644,1527){\arc{210}{4.7124}{6.2832}}
\put(5644,1370){\arc{210}{0}{1.5708}}
\path(5044,1370)(5044,1527)
\path(5149,1632)(5644,1632)
\path(5749,1527)(5749,1370)
\path(5644,1265)(5149,1265)
\put(117,470){\arc{210}{1.5708}{3.1416}}
\put(117,627){\arc{210}{3.1416}{4.7124}}
\put(612,627){\arc{210}{4.7124}{6.2832}}
\put(612,470){\arc{210}{0}{1.5708}}
\path(12,470)(12,627)
\path(117,732)(612,732)
\path(717,627)(717,470)
\path(612,365)(117,365)
\put(1362,462){\arc{210}{1.5708}{3.1416}}
\put(1362,619){\arc{210}{3.1416}{4.7124}}
\put(1857,619){\arc{210}{4.7124}{6.2832}}
\put(1857,462){\arc{210}{0}{1.5708}}
\path(1257,462)(1257,619)
\path(1362,724)(1857,724)
\path(1962,619)(1962,462)
\path(1857,357)(1362,357)
\put(2630,469){\arc{210}{1.5708}{3.1416}}
\put(2630,626){\arc{210}{3.1416}{4.7124}}
\put(3125,626){\arc{210}{4.7124}{6.2832}}
\put(3125,469){\arc{210}{0}{1.5708}}
\path(2525,469)(2525,626)
\path(2630,731)(3125,731)
\path(3230,626)(3230,469)
\path(3125,364)(2630,364)
\put(3889,469){\arc{210}{1.5708}{3.1416}}
\put(3889,626){\arc{210}{3.1416}{4.7124}}
\put(4384,626){\arc{210}{4.7124}{6.2832}}
\put(4384,469){\arc{210}{0}{1.5708}}
\path(3784,469)(3784,626)
\path(3889,731)(4384,731)
\path(4489,626)(4489,469)
\path(4384,364)(3889,364)
\put(5149,469){\arc{210}{1.5708}{3.1416}}
\put(5149,626){\arc{210}{3.1416}{4.7124}}
\put(5644,626){\arc{210}{4.7124}{6.2832}}
\put(5644,469){\arc{210}{0}{1.5708}}
\path(5044,469)(5044,626)
\path(5149,731)(5644,731)
\path(5749,626)(5749,469)
\path(5644,364)(5149,364)
\path(357,1272)(357,732)
\path(1617,1272)(1617,732)
\path(2877,1272)(2877,732)
\path(4137,1272)(4137,732)
\path(5397,1272)(5397,732)
\put(6417,1377){\arc{210}{1.5708}{3.1416}}
\put(6417,1534){\arc{210}{3.1416}{4.7124}}
\put(6912,1534){\arc{210}{4.7124}{6.2832}}
\put(6912,1377){\arc{210}{0}{1.5708}}
\path(6312,1377)(6312,1534)
\path(6417,1639)(6912,1639)
\path(7017,1534)(7017,1377)
\path(6912,1272)(6417,1272)
\put(6410,477){\arc{210}{1.5708}{3.1416}}
\put(6410,634){\arc{210}{3.1416}{4.7124}}
\put(6905,634){\arc{210}{4.7124}{6.2832}}
\put(6905,477){\arc{210}{0}{1.5708}}
\path(6305,477)(6305,634)
\path(6410,739)(6905,739)
\path(7010,634)(7010,477)
\path(6905,372)(6410,372)
\path(6657,1272)(6657,732)
\path(312,372)(312,12)(2922,12)
	(2922,147)(2922,372)
\path(4092,372)(4092,12)(6702,12)
	(6702,147)(6702,372)
\path(1257,552)(717,552)
\path(2517,552)(1977,552)
\path(5037,552)(4497,552)
\path(6297,552)(5757,552)
\dashline{45.000}(2922,1632)(2922,1992)(312,1992)
	(312,1857)(312,1632)
\dashline{45.000}(6702,1632)(6702,1992)(4092,1992)
	(4092,1857)(4092,1632)
\dashline{45.000}(1257,1452)(717,1452)
\dashline{45.000}(2517,1452)(1977,1452)
\dashline{45.000}(5037,1452)(4497,1452)
\dashline{45.000}(6297,1452)(5757,1452)
\put(169,1362){\makebox(0,0)[lb]{\smash{{\SetFigFont{8}{9.6}{\familydefault}{\mddefault}{\updefault}$113$}}}}
\put(1399,1362){\makebox(0,0)[lb]{\smash{{\SetFigFont{8}{9.6}{\familydefault}{\mddefault}{\updefault}$122$}}}}
\put(2674,1362){\makebox(0,0)[lb]{\smash{{\SetFigFont{8}{9.6}{\familydefault}{\mddefault}{\updefault}$323$}}}}
\put(3941,1362){\makebox(0,0)[lb]{\smash{{\SetFigFont{8}{9.6}{\familydefault}{\mddefault}{\updefault}$121$}}}}
\put(1406,453){\makebox(0,0)[lb]{\smash{{\SetFigFont{8}{9.6}{\familydefault}{\mddefault}{\updefault}$313$}}}}
\put(2681,461){\makebox(0,0)[lb]{\smash{{\SetFigFont{8}{9.6}{\familydefault}{\mddefault}{\updefault}$112$}}}}
\put(3941,461){\makebox(0,0)[lb]{\smash{{\SetFigFont{8}{9.6}{\familydefault}{\mddefault}{\updefault}$233$}}}}
\put(5209,461){\makebox(0,0)[lb]{\smash{{\SetFigFont{8}{9.6}{\familydefault}{\mddefault}{\updefault}$221$}}}}
\put(6432,1362){\makebox(0,0)[lb]{\smash{{\SetFigFont{8}{9.6}{\familydefault}{\mddefault}{\updefault}$223$}}}}
\put(6477,462){\makebox(0,0)[lb]{\smash{{\SetFigFont{8}{9.6}{\familydefault}{\mddefault}{\updefault}$131$}}}}
\put(5186,1362){\makebox(0,0)[lb]{\smash{{\SetFigFont{8}{9.6}{\familydefault}{\mddefault}{\updefault}$133$}}}}
\put(117,1370){\arc{210}{1.5708}{3.1416}}
\put(117,1527){\arc{210}{3.1416}{4.7124}}
\put(612,1527){\arc{210}{4.7124}{6.2832}}
\put(612,1370){\arc{210}{0}{1.5708}}
\path(12,1370)(12,1527)
\path(117,1632)(612,1632)
\path(717,1527)(717,1370)
\path(612,1265)(117,1265)
\end{picture}
}
\end{center}
\caption{Conflict graph for $n=3$.} 
\label{fig:conflicts3}
\end{figure}
normal lines show conflicting pairs.
We can choose at most two inputs from each triple and at most one from each   conflicting pair. 
Hence there are at most 6 conflict-free transformations from Table~\ref{tab:pairs}, 
for example, $$[1,1,2],[2,2,3],[1,3,3],[1,1,3],[1,2,2],[2,3,3].$$
Adding the identity  and the three constant transformations, we get a total of at most 10.
The inputs shown in Table~\ref{tab:sf3} are conflict-free and generate precisely these 10 transformations.  Hence $\sig(L)\le 10$ for any $L$ with $n=3$.

\section{Conclusions}
\label{sec:con}

We conjecture that the syntactic complexity of languages accepted by the nearly monotonic automata of Definition~\ref{def:nm} meets the upper bound for star-free languages in general: 
\medskip

\noin
{\bf Conjecture 1}
\emph{The syntactic complexity of a star-free language $L$ with $\kappa(L)=n \ge 2$ satisfies $\sig(L) \le {h(n)}$, and this bound is tight.}

\medskip
Our results are summarized in Table~\ref{tab:Summary}. Let $Q = \{1,\ldots,n\}$, and $Q' = Q \setminus \{n\}$. The figures in bold type are tight bounds verified using {\it GAP}~\cite{GAP}, by enumerating aperiodic subsemigroups of $\cT_Q$. The asterisk $\ast$ indicates that the bound is already tight for a smaller alphabet. The last four rows show the values of $f(n) = |M_Q|$, $g(n-1)=|PM_{Q'}|$, the conjectured tight upper bound $h(n) = |PM_{Q'}| + n - 1$, and the weaker upper bound $(n+1)^{n-1}$ (not tight in general).

\begin{table}[ht]
\caption{Syntactic complexity of star-free languages.}
\label{tab:Summary}
\begin{center}
$
\begin{array}{|c||c|c|c|c|c|c|}    
\hline
\ \ |\Sig| \ / \  n \ \ &\ \ \ 1 \ \ \ &\ \ \ 2 \ \ \ & \ \ \  3 \ \ \
& \ \ \ 4 \ \ \ & \ \ \ 5 \ \ \ & \ \ \ 6 \ \ \ \\
\hline \hline

1
&	{\bf 1} 	&	{\bf 1}	&	{\bf 2}	&	{\bf 3}
&	{\bf 5}		&	{\bf 6} \\
\hline

2 
&	\ast		&	{\bf 2}	&	\ {\bf 7} \	
&	{\bf 19}	&	{\bf 62} & ? \\
\hline

3
&	\ast						&	{\bf 3}						&{\bf 9}	
&	{\bf 31}	&	? & ? \\
\hline

4
&	\ast						&	\ast			&\	{\bf 10} 	 \
&	{\bf 34}			&	?	& ? \\
\hline

5
&	\ast						&	\ast						&	\ast					
&	37							&	~125~	& ? \\
\hline

6
&	\ast						&	\ast						&	\ast					
&	40							&	~126~  & ? \\
\hline

7
&	\ast						&	\ast						&	\ast					
&	41							&	~191~  & ? \\
\hline

8
&	\ast						&	\ast						&	\ast					
&	?							&	~195~  & ? \\
\hline

9
&	\ast						&	\ast						&	\ast					
&	?							&	~196~  	& ~1,001~ \\
\hline

10
&	\ast						&	\ast						&	\ast					
&	?							&	?  		& ~1,006~ \\
\hline

11
&	\ast						&	\ast						&	\ast					
&	?							&	?  		& ~1,007~ \\
\hline

\cdots
&								&								&							
&								&    		& \\
\hline

f(n) = |M_Q|
&	1				&	3					& 10				&	35	
&	126  & 462 \\
 \hline

 \ g(n-1)= |CM_Q|=|PM_{Q'}| \
 &	-				&	2					& 8					&	38	
 &	192  & 1,002 \\
 \hline

h(n) = |PM_{Q'}| + n - 1 
&	-				&	3					& 10				&	41
&	196  & 1,007 \\
\hline

(n+1)^{n-1}
&	 1				&	3			& 	16					&  	125	
&	1,296	& 16,807 \\
\hline
\end{array}
$
\end{center}
\end{table}
\medskip

\noin
{\bf Acknowledgment}
We thank Zoltan \'Esik and Judit Nagy-Gyorgy for pointing out to us the relationship between aperiodic inputs and Caley's theorem.

\providecommand{\noopsort}[1]{}

%\bibliography{SynStarFree}
%\bibliographystyle{splncs_srt2}

\end{document}